\newtheorem{definition}{Definition}
\newtheorem*{theorem*}{Theorem (Informal)}
\newtheorem{theorem}{Theorem}
\newtheorem{corollary}[theorem]{Corollary}
\theoremstyle{remark}
\theoremstyle{remark}
\newcommand{\mx}{\mathbf{x}}
\DeclareMathOperator{\poly}{poly}
\DeclareMathOperator{\supp}{supp}
\newcommand{\eps}{\varepsilon}
\begin{document}

\title{Learning quantum Hamiltonians at any temperature in polynomial time with Chebyshev and bit complexity}
\author{Ales Wodecki and Jakub Marecek} 

\maketitle

\begin{abstract}
We consider the problem of learning local quantum Hamiltonians given copies of their Gibbs state at a known inverse temperature,
following Haah et al.\ [2108.04842] and Bakshi et al.\ [arXiv:2310.02243].
Our main technical contribution is a new flat polynomial approximation of the exponential function based on the Chebyshev expansion,
which enables the formulation of learning quantum Hamiltonians as a polynomial optimization problem. 
This, in turn, can benefit from the use of moment/SOS relaxations, whose polynomial bit complexity requires careful analysis [O'Donnell, ITCS 2017].
Finally, we show that learning a $k$-local Hamiltonian, whose dual interaction graph is of bounded degree, runs in polynomial time under mild assumptions.
\end{abstract}

\section{Introduction}

There is a considerable recent interest in learning models of quantum systems \cite{gebhart2023learning,anshu2023survey}. 
The problem is non-trivial, due to its non-commutativity and non-convexity.
A natural approach to such problems utilizes an approximation of the matrix exponential or matrix logarithm with polynomials 
and transforms the problem into a polynomial optimization problem (POP), either matrix-valued or operator-valued.
Either way, one can obtain the so-called moment/SOS hierarchy \cite{henrion2020moment} of semidefinite programming (SDP) relaxations.
In Theoretical Computer Science, recent papers \cite{anshu2021sample,haah2022optimal,bakshi2023learning} considered the use of Taylor expansion to obtain an approximation of the matrix exponential using polynomials.
Anshu et al.\ \cite{anshu2021sample} considered the sample complexity, without bounding the run-time of the algorithms. 
Haah et al.\ \cite{haah2022optimal} developed the first SDP relaxations with bounds on their dimensions, albeit restricted to the case of high-temperature Gibbs states. 
Bakshi et al.\ \cite{bakshi2023learning} developed the first SDP instances with bounds on their dimensions, without any restriction on the temperature of the Gibbs states,
by showing that there exists a so-called flat approximation of the matrix exponential based on Taylor series. 
In both cases \cite{haah2022optimal,bakshi2023learning}, the authors omit the study of bit complexity, which is important in proving the polynomial run time of an algorithm, especially considering that SDP relaxations obtained from
POP using the moment/SOS method \cite{o2017sos,raghavendra2017bit,gribling2023note} exhibit superpolynomial bit complexity even for any constant-factor approximation, in general.  
Independently in the physics community, \cite{marecek2020quantum,bondar2022recovering,bondar2023globally} have developed the use of Magnus expansion and moment/SOS relaxations in learning models of closed and open quantum systems from estimates of arbitrary states, which have been obtained using quantum state tomography, and
demonstrated the practical performance of this approach. 
Bondar et al.\ \cite[Figure 2]{bondar2023globally} also showed that the use of Chebyshev expansion \cite{Chebyshev1854,Chebyshev1859} is numerically superior to the use of the Taylor expansion. 
This motivated our work that aims to improve the run-time bounds of Bakshi et al.\ \cite{bakshi2023learning}.

In this paper, we show that the use of Chebyshev approximation also leads to instances of a polynomial optimization problem (POP)
for which the SDP relaxations have bounds on their dimensions.
Our main technical contribution is a new flat polynomial approximation to the exponential function, based on the Chebyshev expansion.
As an example of use of this new flat approximation, we also show that SDP relaxations of learning local quantum Hamiltonians given copies of their Gibbs state, can be solved in polynomial time, under mild assumptions. 

\section{A Flat Approximation of the Exponential Using Chebyshev Series}

Polynomial approximations of matrix exponential are key technical tools in the analysis of a number of optimization problems in quantum information theory \cite[e.g.]{marecek2020quantum,anshu2021sample,haah2022optimal,bondar2022recovering,bondar2023globally,bakshi2023learning}.
Bakshi et al.\ \cite{bakshi2023learning} suggested that one should aim to use a so-called flat approximation of the exponential: 

\begin{definition}\label{def_flat_approx}
Let $\epsilon, \eta \in \left(0,1\right)$ and $K>0$ then a polynomial $p$ is called a $\left(\epsilon,\eta,K\right)-$approximation if
\begin{itemize}
    \item $\left|p\left(x\right)-e^{x}\right|\leq\epsilon\text{ for }x\in\left[-K,K\right]$,
    \item $\left|p\left(x\right)\right|\leq\max\left(1,e^{x}\right)e^{\eta\left|x\right|}\label{eq_flat_approx_property}$.
\end{itemize} 
\end{definition}

Unfortunately, the Taylor, Chebyshev, and QSVT-style series fail to provide a flat approximation of the exponential directly \cite{Gil19,tang2023cs,bakshi2023learning}. In \cite{bakshi2023learning}, a flat approximation is constructed based on Taylor series. Motivated by the numerical advantages that a Chebyshev expansion provides over Taylor's \cite[Figure 2]{bondar2023globally}, we present two novel constructions of a polynomial expansion based on Chebyshev's approximation that is flat (Definition \ref{def_flat_approx}). The second definition definition that the approximation must satisfy is a boundedness property described by the following definition.

\begin{definition}\label{def_poly_boundedness}
Let $p\left ( x \right )$ be a polynomial in the variable $x\in\mathbb{R}^{n}$. Then we call a the polynomial $\left(d,C\right)$ bounded if
\begin{itemize}
    \item the degree of $p$ is at most $d$,
    \item for each monomial in $p$ having degree $q \leq d$ the coefficients have magnitue at most $\frac{C}{q!}$.
\end{itemize}
\end{definition}

\subsection{Approximation Properties of Chebyshev Series and Bessel Functions}

For the readers' convenience, we provide a short review of relevant definitions and results related to the Chebyshev expansion and modified Bessel functions of the first kind, which are relevant in the context of the Chebyshev series approximation of the exponential.
\begin{definition}\label{def_cheb_poly} The polynomial functions $T_{n}$, which satisfy the recurrence relations 
\begin{align} 
T_{0}\left(x\right) &=  1, \\ 
T_{1}\left(x\right) &=  x, \\
T_{n+1}\left(x\right) &= 2xT_{n}\left(x\right)-T_{n-1}\left(x\right),\text{ where }n\geq1
\end{align}
are called Chebyschev polynomials of the first kind.

\end{definition}
\begin{definition}
\label{def:Chebyshev}
Let $f:\left[-1,1\right]\rightarrow R$ be an analytical function then 
\begin{equation}
f_{n}\left(x\right)=\sum_{j=0}^{n}a_{j}T\left(x\right),
\end{equation}
where
\begin{equation}\label{eq_coefs_of_chebyshev}
a_{0}=\frac{1}{\sqrt{\pi}}\int_{-1}^{1}f\left(x\right)\frac{dx}{\sqrt{x^{2}-1}},a_{j}=\frac{2}{\sqrt{\pi}}\int_{-1}^{1}f\left(x\right)T_{j}\left(x\right)\frac{dx}{\sqrt{x^{2}-1}}\text{ for }j\geq1    
\end{equation}
is called the Chebyshev series truncation of order $n$.
\end{definition}

\begin{theorem}\label{thm_cheb_est}
Let $T_{n}$ be the series of polynomials from Definition \ref{def_cheb_poly}, then for any $k\geq 0$
\begin{align}\label{eq_chebyshev_upper_bound}
    \left|T_{k}\left(x\right)\right| & \leq \left(x+\sqrt{x^{2}-1}\right)^{k} &\text{ for }\left|x\right|\geq1, \\
    \left|T_{n}\left(x\right)\right| & \leq 1 &\text{ for }\left|x\right|\leq1 
\end{align}
holds.
\end{theorem}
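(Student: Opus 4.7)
The plan is to exploit the two classical closed-form representations of the Chebyshev polynomials---$T_k(\cos\theta)=\cos(k\theta)$ on $[-1,1]$ and $T_k(\cosh t)=\cosh(kt)$ on $[1,\infty)$---each of which delivers one of the two bounds immediately. Both representations are proved by a single induction on $k$ using the recurrence of Definition \ref{def_cheb_poly}.

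I would first establish the trigonometric form. The base cases $T_0(\cos\theta)=1$ and $T_1(\cos\theta)=\cos\theta$ are immediate, and the inductive step follows from the product-to-sum identity $2\cos\theta\cos(k\theta)=\cos((k+1)\theta)+\cos((k-1)\theta)$, which aligns exactly with $T_{k+1}(x)=2xT_k(x)-T_{k-1}(x)$ after setting $x=\cos\theta$. Since every $x\in[-1,1]$ admits such a representation, $|T_k(x)|=|\cos(k\theta)|\leq 1$, which is the second bound.

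For the first bound I would rerun the same induction with $\cos$ replaced by $\cosh$, using $2\cosh t\cosh(kt)=\cosh((k+1)t)+\cosh((k-1)t)$ to conclude $T_k(\cosh t)=\cosh(kt)$ for $t\geq 0$. For $x\geq 1$ I set $x=\cosh t$ and use $e^{t}=\cosh t+\sinh t = x+\sqrt{x^{2}-1}$; combined with $\cosh(kt)=\tfrac{1}{2}(e^{kt}+e^{-kt})\leq e^{kt}$ this yields $|T_k(x)|\leq (x+\sqrt{x^{2}-1})^{k}$.

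The single subtle point, and the one I would flag as the main care step, is the range $x\leq -1$: there $x+\sqrt{x^{2}-1}$ is negative, so the stated bound must be understood via the parity relation $T_k(-x)=(-1)^{k}T_k(x)$ (itself proved by a one-line induction from the recurrence using $T_0$ even and $T_1$ odd). This reduces the negative case to the $x\geq 1$ case applied to $|x|$, giving $|T_k(x)|\leq(|x|+\sqrt{x^{2}-1})^{k}$. Apart from this sign bookkeeping, no genuine obstacle arises; verifying the two product-to-sum identities is a routine consequence of the angle-addition formulas.
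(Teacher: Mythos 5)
Your proof is correct and complete, but there is nothing in the paper to compare it against: the paper's ``proof'' of Theorem~\ref{thm_cheb_est} is a one-line deferral to the appendix of an external reference, whereas you supply a genuine self-contained argument. Your route is the standard one and it works: the inductions establishing $T_k(\cos\theta)=\cos(k\theta)$ and $T_k(\cosh t)=\cosh(kt)$ line up exactly with the recurrence of Definition~\ref{def_cheb_poly} via the product-to-sum identities, the identity $e^{t}=\cosh t+\sinh t=x+\sqrt{x^{2}-1}$ for $x=\cosh t$, $t\ge 0$ is right, and $\cosh(kt)\le e^{kt}$ closes the first bound for $x\ge 1$. Most valuably, you correctly flag that the statement as printed is literally false for $x\le -1$: there $x+\sqrt{x^{2}-1}\in[-1,0)$, so for instance $|T_2(-2)|=7$ while $\bigl(-2+\sqrt{3}\bigr)^{2}\approx 0.07$. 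Your repair via the parity relation $T_k(-x)=(-1)^{k}T_k(x)$, yielding the correct inequality $|T_k(x)|\le\bigl(|x|+\sqrt{x^{2}-1}\bigr)^{k}$ for $|x|\ge 1$, is exactly what the theorem should say (and is what is actually used later in the paper, e.g.\ in the estimate \eqref{eq_important_est}, where only even-degree terms at negative arguments appear). The one cosmetic point: the theorem statement mixes the indices $k$ and $n$ in its two displayed inequalities; your proof treats them uniformly as a single index, which is the right reading.
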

\begin{proof}
The proof is given in the appendix of \cite{tang2023cs}.
\end{proof}

To conclude the overview, we provide a key result, which shows that a truncation of the Chebyshev series provides a uniform approximation of an analytic function on $\left[-1,1\right]$, provided that it possesses a complex analytical continuation.

\begin{theorem} \label{thm_chebyschev_std}
Let $f$ be an analytical function on $\left[-1,1\right]$, which possesses an analytical continuation defined on the Bernstein ellipse $E_{\rho}=\left\{ \frac{1}{2}\left(z+z^{-1}\right):\left|z\right|=\rho\right\}$ on which $\left|f\left(z\right)\right|\leq M$. Then the Chebyshev coefficients satisfy 
\begin{equation}
\left|a_{0}\right|\leq M,\left|a_{k}\right|\leq2M\rho^{-k},\text{ where }k\geq1.
\end{equation}
Consequently, the Chebyshev truncation satisfies
\begin{equation}
    \left\Vert f_{n}-f\right\Vert _{\left[-1,1\right]}\leq\frac{2M\rho^{-n}}{\rho-1}
\end{equation}
and by setting $n$ sufficiently large ($n=\left\lceil \frac{1}{\ln\rho}\ln\frac{2M}{\left(\rho-1\right)\epsilon}\right\rceil $), we get
\begin{equation}
\left\Vert f_{n}-f\right\Vert _{\left[-1,1\right]}\leq\epsilon,
\end{equation}
for any given $\epsilon>0$.
\end{theorem}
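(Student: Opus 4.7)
The plan is to prove this classical bound in three stages: establish the coefficient bound $|a_k|\le 2M\rho^{-k}$, sum the tail to get the truncation error, and solve for $n$. This is essentially the standard argument found in Trefethen's treatment of Chebyshev approximation, applied with a careful tracking of constants.

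First, I would rewrite the Chebyshev coefficients via the substitution $x=\cos\theta$, exploiting the identity $T_k(\cos\theta)=\cos(k\theta)$ and converting the weighted integral into a plain Fourier cosine coefficient
\begin{equation*}
a_k=\frac{1}{\pi}\int_{-\pi}^{\pi} f(\cos\theta)\cos(k\theta)\,d\theta,\qquad k\ge 1,
\end{equation*}
with an analogous expression for $a_0$. Passing to complex variables via $z=e^{i\theta}$ and $\cos\theta=\tfrac12(z+z^{-1})$ (the Joukowski map), the coefficient becomes a contour integral on the unit circle of $F(z):=f\!\left(\tfrac12(z+z^{-1})\right)$ against $\tfrac12(z^k+z^{-k})\,dz/z$. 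Because $F$ inherits analyticity from $f$ on the closed annulus $1\le|z|\le\rho$ (the Bernstein ellipse $E_\rho$ is precisely the Joukowski image of $|z|=\rho$), Cauchy's theorem lets me deform the contour outward to $|z|=\rho$ for the $z^{-k}$ part (and inward, via $z\mapsto 1/z$ symmetry, for the $z^{k}$ part).

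Second, on the contour $|z|=\rho$ I would bound the integrand in absolute value using $|F|\le M$, the fact that the relevant factor is $\rho^{-k}$ after deformation, and contour length $2\pi\rho$. The arithmetic yields the stated bounds $|a_0|\le M$ and $|a_k|\le 2M\rho^{-k}$ for $k\ge 1$, the extra factor of $2$ coming from the usual normalization of the Chebyshev expansion versus the Fourier cosine expansion.

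Third, for the uniform error bound I would write
\begin{equation*}
f(x)-f_n(x)=\sum_{k=n+1}^{\infty} a_k T_k(x),
\end{equation*}
apply $|T_k(x)|\le 1$ on $[-1,1]$ from Theorem \ref{thm_cheb_est}, and sum the geometric series:
\begin{equation*}
\|f-f_n\|_{[-1,1]}\le \sum_{k=n+1}^{\infty} 2M\rho^{-k}=\frac{2M\rho^{-n}}{\rho-1}.
\end{equation*}
To obtain the final $\epsilon$-bound, I would require $\rho^{-n}\le\tfrac{(\rho-1)\epsilon}{2M}$, i.e. $n\ln\rho\ge\ln\tfrac{2M}{(\rho-1)\epsilon}$, which gives the displayed choice of $n$.

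The main obstacle is the contour-deformation step: one must check that the analytical continuation of $f$ lifts unambiguously through the two-to-one Joukowski map, so that $F(z)=F(z^{-1})$ is single-valued and analytic on the whole annulus $1\le|z|\le\rho$. Once this invariance is noted, it symmetrizes the two contributions from $z^{k}$ and $z^{-k}$ and explains the factor of $2$ in the coefficient bound. Everything else is routine bookkeeping: an integral estimate, a geometric sum, and taking a logarithm.
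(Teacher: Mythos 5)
Your proof is correct and is the standard Bernstein-ellipse argument (Joukowski map, contour deformation to $|z|=\rho$, geometric tail sum) that underlies this classical result. The paper itself gives no proof here --- it simply defers to the cited reference \cite{tang2023cs}, whose argument is essentially the one you sketch --- so there is nothing to compare beyond noting that your handling of the two-to-one lift $F(z)=F(1/z)$ and the resulting factor of $2$ is the right way to fill in the details.
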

\begin{proof}
    See \cite{tang2023cs}.
\end{proof}

\begin{theorem}
Let 
\begin{equation}\label{eq_bessel_def}
I_{v}\left(z\right)\coloneqq i^{-v}J_{v}\left(ix\right)=\sum_{m=0}^{\infty}\frac{1}{m!\Gamma\left(m+v+1\right)}\left(\frac{x}{2}\right)^{2m+v}    
\end{equation}
denote the modified Bessel function, then
\begin{equation}
\label{eq_bessel_upper}
I_{v}\left(x\right)<\frac{\cosh x}{\Gamma\left(v+1\right)}\left(\frac{x}{2}\right)^{v} \quad \text{ for }x>0,v>-\frac{1}{2},
\end{equation}
\begin{equation}\label{eq_bessel_lower}
I_{v}\left(x\right)>\frac{1}{\Gamma\left(v+1\right)}\left(\frac{x}{2}\right)^{v}\text{ for }x>0,v>-\frac{1}{2}    
\end{equation}
holds.

\end{theorem}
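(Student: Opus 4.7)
The plan is to handle the two bounds separately, both by working directly from the defining power series for $I_v$. The lower bound is essentially immediate: for $x > 0$ and $v > -\frac{1}{2} > -1$, every value $\Gamma\left(m+v+1\right)$ is positive, so every summand $\frac{1}{m!\,\Gamma\left(m+v+1\right)}\left(\frac{x}{2}\right)^{2m+v}$ is strictly positive. Retaining only the $m=0$ term and discarding the strictly positive tail yields $I_v\left(x\right) > \frac{1}{\Gamma\left(v+1\right)}\left(\frac{x}{2}\right)^v$.

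For the upper bound, the idea is to expand the right-hand side as a power series in $x$ and compare coefficients with those of $I_v$. Using $\cosh x = \sum_{m \geq 0} x^{2m}/\left(2m\right)!$, I would rewrite
\[
\frac{\cosh x}{\Gamma\left(v+1\right)}\left(\frac{x}{2}\right)^v \;=\; \sum_{m=0}^{\infty} \frac{4^m}{\Gamma\left(v+1\right)\left(2m\right)!}\left(\frac{x}{2}\right)^{2m+v}.
\]
Both series are in the same monomials $\left(x/2\right)^{2m+v}$ with positive coefficients for $x > 0$, so the claim reduces to showing the term-by-term inequality $\frac{\left(2m\right)!}{4^m \, m!} \leq \frac{\Gamma\left(m+v+1\right)}{\Gamma\left(v+1\right)}$ for every $m \geq 0$, with at least one strict inequality.

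The core step is to simplify the left-hand side via the Legendre duplication formula $\Gamma\left(2m+1\right) = \frac{4^m}{\sqrt{\pi}}\Gamma\left(m+\frac{1}{2}\right)\Gamma\left(m+1\right)$, which collapses it to $\Gamma\left(m+\frac{1}{2}\right)/\Gamma\left(\frac{1}{2}\right) = \prod_{k=1}^m \left(k-\frac{1}{2}\right)$; the right-hand side of the target inequality is the rising factorial $\prod_{k=1}^m \left(v+k\right)$. Because $v > -\frac{1}{2}$, each factor satisfies $v + k > k - \frac{1}{2} > 0$, so for every $m \geq 1$ the factorwise comparison, and hence the product, is strict; summing over $m$ then recovers the strict upper bound on $I_v\left(x\right)$. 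The main obstacle is really only spotting that the duplication formula is the right tool to match the combinatorial quantity $(2m)!/(4^m m!)$ to a shifted Gamma ratio; after that, the argument is a routine monotonicity check on rising factorials in $v$.
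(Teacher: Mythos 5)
Your argument is correct and complete. The lower bound follows exactly as you say: for $v>-\tfrac{1}{2}$ every $\Gamma(m+v+1)$ is positive, so dropping the strictly positive tail of the series after the $m=0$ term gives \eqref{eq_bessel_lower}. For the upper bound, your reduction to the coefficientwise inequality $\frac{(2m)!}{4^m m!}\le\frac{\Gamma(m+v+1)}{\Gamma(v+1)}$ is the right move, and the duplication-formula identity $\frac{(2m)!}{4^m m!}=\frac{\Gamma\left(m+\frac{1}{2}\right)}{\Gamma\left(\frac{1}{2}\right)}=\prod_{k=1}^{m}\left(k-\frac{1}{2}\right)$ together with the factorwise comparison $v+k>k-\frac{1}{2}$ settles it; the $m=0$ terms agree and all terms with $m\ge 1$ are strictly smaller, so the inequality on the sum is strict for $x>0$. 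The one point of comparison worth noting is that the paper does not prove this theorem at all: it simply defers to the reference \cite{LUKE197241}, where these bounds appear as part of a broader study of inequalities for Bessel and hypergeometric functions (Luke's derivation is likewise a term-by-term series comparison, phrased via ratios of Pochhammer symbols). Your write-up therefore supplies a short, self-contained, elementary proof where the paper offers only a citation, which is arguably preferable in context since both bounds are used quantitatively later (e.g.\ in Theorem \ref{thm_stable_w_r_t_reg} and the surrounding decay estimates).
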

\begin{proof}
The definition of \eqref{eq_bessel_def} is standard and \eqref{eq_bessel_upper} and \label{eq_bessel_lower} are shown in \cite{LUKE197241}.
\end{proof}

\begin{theorem}
Let $t \geq 1$, then
\begin{equation}\label{eq_e_x_chebyshev}
e^{x}=\sum_{n=0}^{\infty}a_{n}\left(t\right)T_{n}\left(\frac{x}{t}\right),
\end{equation}
where 
\begin{equation}
a_{0}\left(t\right)=I_{0}\left(t\right),a_{n}\left(t\right)=2I_{n}\left(t\right)\text{ for all }n\geq1.
\end{equation}
\end{theorem}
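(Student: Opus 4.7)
The plan is to reduce the claimed identity to the standard Chebyshev expansion of an entire function on $[-1,1]$. I would substitute $y = x/t$ and let $f(y) = e^{ty}$; then \eqref{eq_e_x_chebyshev} becomes precisely the Chebyshev series of $f$. Since $f$ is entire, Theorem \ref{thm_chebyschev_std} (applied with any $\rho > 1$, say with $M = e^{t\rho}$) shows that the truncated Chebyshev series converges to $f$ uniformly on $[-1,1]$, so in particular the expansion is valid pointwise there with whatever coefficients Definition \ref{def:Chebyshev} produces.

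Next, I would identify the Chebyshev coefficients in closed form. Applying Definition \ref{def:Chebyshev} and making the trigonometric substitution $y = \cos\theta$ (so that $T_n(\cos\theta) = \cos(n\theta)$) yields
\begin{equation*}
a_0(t) = \frac{1}{\pi}\int_0^{\pi} e^{t\cos\theta}\,d\theta, \qquad a_n(t) = \frac{2}{\pi}\int_0^{\pi} e^{t\cos\theta}\cos(n\theta)\,d\theta \text{ for } n\geq 1.
\end{equation*}
These are the well-known integral representations of the modified Bessel functions of the first kind. To match them with the power-series definition \eqref{eq_bessel_def}, I would expand $e^{t\cos\theta}$ as a Taylor series, interchange summation and integration (justified by uniform convergence on $[0,\pi]$), evaluate the classical integrals $\int_0^\pi \cos^k\theta \cos(n\theta)\,d\theta$ via the binomial formula, and collect the surviving terms. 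The bookkeeping yields $a_0(t) = I_0(t)$ and $a_n(t) = 2I_n(t)$ for $n \geq 1$.

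Finally, I would extend the equality from $x/t \in [-1,1]$ to all $x \in \mathbb{R}$. By the upper bound \eqref{eq_bessel_upper}, $I_n(t) \le \frac{\cosh t}{n!}(t/2)^n$, which decays faster than any geometric sequence in $n$, whereas by Theorem \ref{thm_cheb_est}, $|T_n(x/t)|$ grows at most like $C(x,t)^n$ for any fixed $x$. Hence the right-hand side of \eqref{eq_e_x_chebyshev} converges absolutely and uniformly on compact subsets of $\mathbb{C}$, defining an entire function. Since it agrees with the entire function $e^x$ on $[-t,t]$, the two must coincide everywhere by the identity principle.

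The main obstacle is the coefficient identification step: the excerpt introduces $I_v$ only through its power series \eqref{eq_bessel_def}, not its integral representation, so one must actually derive the equivalence rather than merely quote it. Once that calculation is done, convergence of the expansion and its extension to all real $x$ follow cleanly from Theorems \ref{thm_cheb_est} and \ref{thm_chebyschev_std} together with the bound \eqref{eq_bessel_upper} already stated in the excerpt.
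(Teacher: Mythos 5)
Your proposal is correct and follows essentially the same route as the paper's own (very terse) proof: apply Theorem \ref{thm_chebyschev_std} to $f(y)=e^{ty}$, rescale via $x\mapsto x/t$, and identify the coefficients \eqref{eq_coefs_of_chebyshev} with the modified Bessel functions \eqref{eq_bessel_def}. The paper defers all details to \cite{tang2023cs}, whereas you supply the two steps it leaves implicit --- the derivation of the integral representation $a_n(t)=\frac{2}{\pi}\int_0^{\pi}e^{t\cos\theta}\cos(n\theta)\,d\theta=2I_n(t)$ from the power-series definition, and the extension of the identity from $[-t,t]$ to all of $\mathbb{R}$ via super-exponential coefficient decay and the identity principle --- both of which are needed since the statement asserts equality for all real $x$.
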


\begin{proof}
Let $f\left(x\right)\coloneqq e^{tx}$ and applying theorem \ref{thm_chebyschev_std} to $f$. Applying the transformation $x\mapsto\frac{x}{t}$ and considering \eqref{eq_coefs_of_chebyshev}, \eqref{eq_bessel_def} results in \eqref{eq_e_x_chebyshev}. A more detailed exposition can be found in \cite{tang2023cs}.
\end{proof}

\begin{theorem}
Let $N \in\mathbb{N}$ and $f_{N}$ be the truncation of \eqref{eq_e_x_chebyshev} then
\begin{equation} 
\left\Vert f_{N}\left(x\right)-e^{x}\right\Vert _{\left[-t,t\right]}\leq\epsilon,
\end{equation}
whenever $N\geq et+\ln\left(\frac{1}{\epsilon}\right)$.
\end{theorem}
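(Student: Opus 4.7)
The plan is to reduce the claim to Theorem~\ref{thm_chebyschev_std} via the linear rescaling $y = x/t$. Set $g(y) := e^{ty}$ on $[-1,1]$, so that $e^x = g(x/t)$. Because the substitution $y \mapsto x/t$ commutes with truncating the Chebyshev series, the degree-$N$ truncation $f_N$ of \eqref{eq_e_x_chebyshev} satisfies $f_N(x) = g_N(x/t)$, where $g_N$ is the degree-$N$ Chebyshev truncation of $g$, and in particular
\[
\|f_N - e^x\|_{[-t,t]} \;=\; \|g_N - g\|_{[-1,1]}.
\]
The problem thus reduces to bounding the Chebyshev error of the entire function $g$ on $[-1,1]$.

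Next I would apply Theorem~\ref{thm_chebyschev_std} to $g$. Since $g$ is entire, every Bernstein ellipse $E_\rho$ with $\rho > 1$ lies in its domain of analyticity. Parametrising $z = (\rho e^{i\theta} + \rho^{-1}e^{-i\theta})/2$, the real part $\mathrm{Re}(z) = \tfrac{1}{2}(\rho+\rho^{-1})\cos\theta$ is maximised at $\theta = 0$, giving
\[
M_\rho \;:=\; \sup_{z \in E_\rho}|g(z)| \;=\; \exp\!\Bigl(\tfrac{t(\rho+\rho^{-1})}{2}\Bigr).
\]
Theorem~\ref{thm_chebyschev_std} then delivers $\|f_N - e^x\|_{[-t,t]} \le 2M_\rho \rho^{-N}/(\rho-1)$.

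I would then fix $\rho = e$, which makes $\ln\rho = 1$ and is the asymptotic minimiser of the exponent $M_\rho \rho^{-N}$ in $\rho$. The bound becomes $\tfrac{2}{e-1}\exp\!\bigl(\tfrac{t(e+e^{-1})}{2} - N\bigr)$, which is at most $\epsilon$ as soon as $N \ge \tfrac{t(e+e^{-1})}{2} + \ln\tfrac{2}{(e-1)\epsilon}$. The hypothesised threshold $N \ge et + \ln(1/\epsilon)$ implies this inequality provided $\tfrac{t(e - e^{-1})}{2} \ge \ln\tfrac{2}{e-1} \approx 0.152$, which holds for all $t \ge 1$, as assumed in the statement of the preceding theorem that defines the expansion \eqref{eq_e_x_chebyshev}.

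The main obstacle is essentially bookkeeping: verifying that the numerical slack between the paper's clean coefficient $e$ and the asymptotically optimal coefficient $(e+e^{-1})/2 \approx 1.543$ is large enough to absorb the additive constant $\ln(2/(e-1))$ under the hypothesis $t \ge 1$. A fully self-contained alternative route is to estimate the Bessel tail $2\sum_{n>N} I_n(t)$ directly via \eqref{eq_bessel_upper} and Stirling's $n! \ge (n/e)^n$; this is elementary but introduces an extraneous $\cosh(t) = \Theta(e^t)$ factor whose logarithm is linear in $t$, which then competes with the principal $et$ term and is harder to absorb cleanly without a sharper Bessel estimate such as $I_n(t) \le (t/2)^n e^{t^2/(4(n+1))}/n!$.
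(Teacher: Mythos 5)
Your proposal is correct, but it is worth noting that the paper does not actually prove this statement at all: its ``proof'' is a bare citation to Lemma 59 of \cite{Gil19}. Your argument therefore supplies something the paper lacks, namely a self-contained derivation from Theorem \ref{thm_chebyschev_std}, which the paper does state. The substance checks out: the rescaling $y=x/t$ correctly identifies $f_N$ with the degree-$N$ Chebyshev truncation of $g(y)=e^{ty}$ (the coefficients $I_0(t),2I_n(t)$ are exactly the Chebyshev coefficients of $e^{ty}$, as the preceding theorem in the paper records), the supremum of $|e^{tz}|$ on $E_\rho$ is indeed $\exp\bigl(t(\rho+\rho^{-1})/2\bigr)$ attained at the right vertex of the ellipse, and with $\rho=e$ the required inequality reduces to $t\sinh(1)\geq\ln\frac{2}{e-1}\approx 0.152$, which holds comfortably for $t\geq 1$. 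This route has the virtue of making the constant $e$ in the threshold transparent: it is just the choice $\ln\rho=1$, with the slack $\sinh(1)\,t$ absorbing the additive constant. Two minor quibbles, neither affecting validity: the claim that $\rho=e$ is the ``asymptotic minimiser'' of $M_\rho\rho^{-N}$ is not accurate (the optimal $\rho$ grows like $2N/t$ for $N\gg t$, which is how sharper bounds such as the one in \cite{Gil19} are obtained); and your reliance on $t\geq 1$ is legitimate but should be flagged as an inherited hypothesis, since the theorem as stated in the paper omits it even though the expansion \eqref{eq_e_x_chebyshev} is only introduced under that assumption.
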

\begin{proof}
See \cite{Gil19} (Lemma 59).
\end{proof}
Note that the previous result shows that the Chybeshev expansion given by \eqref{eq_e_x_chebyshev} exhibits better approximation properties compared to Taylor's expansion, where an approximation of accuracy $\epsilon$ is reached for $N$ satisfying \cite{bakshi2023learning}
\begin{equation}
    N\geq 10t+\ln\left(\frac{1}{\epsilon}\right).
\end{equation}

The flatness property proven in this section hinges on the following recent result \cite{wodecki2024taylorlike}, which gives an algebraic criterion that can be used to show that the Chebyshev series truncation provides upper and lower bounds for the exponential for $x <= -1$.

\begin{theorem}[Reduction theorem]\label{thm_reduction_theorem} Let $N\in \mathbb{N}$ and a let 
\begin{equation}\label{eq_cheby_approx_of_exponential}
f_{n}=\sum_{n=0}^{N}a_{n}T_{n}\left(x\right)   
\end{equation}
be the $N$-th Chebyshev approximation on $L^{2}\left(\left[-1,1\right],\frac{1}{\sqrt{1-x^{2}}}dx\right)$. Then 
\begin{equation}\label{eq_upper_bound_cheb}
e^{x}\leq\sum_{n=0}^{N}a_{n}T_{n}\left(x\right)\end{equation}
holds for even $n$ on $\left(-\infty,-1\right)$ if the polynomial
\begin{equation}\label{eq_key_polynomial_for_thm}
G_{N}\left(x\right)=I_{N+1}\left(1\right)U_{N-1}\left(x\right)+I_{N}\left(1\right)U_{N-2}\left(x\right)-I_{N}\left(1\right)+I_{N}\left(1\right)T_{N}\left(x\right)
\end{equation}
is positive on $\left(-\infty,-1\right)$ and 
\begin{equation}
e^{x}\geq\sum_{n=0}^{N}a_{n}T_{n}\left(x\right)\end{equation}
holds for odd $n$ on $\left(-\infty,-1\right)$ if $G_{N}\left(x\right)$ is negative on $\left(-\infty,-1\right)$.
\end{theorem}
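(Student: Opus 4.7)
The plan is to express the truncation error $R_N(x) := e^x - f_N(x)$ on $(-\infty,-1)$ as a single algebraic expression whose leading part is proportional to $G_N(x)$ and whose remaining tail has a controlled sign, and then argue that the sign of $G_N$ governs the sign of $R_N$, hence the inequality between $e^x$ and $f_N(x)$. Using the Bessel-form coefficients recalled in the preceding theorem, I start from
\begin{equation*}
R_N(x) \;=\; 2\sum_{n=N+1}^{\infty} I_n(1)\, T_n(x).
\end{equation*}

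First I would perform an Abel-type summation by parts on this tail, using the three-term Bessel recurrence $I_{n-1}(1) - I_{n+1}(1) = 2n\, I_n(1)$ together with the Chebyshev identities $2T_n(x) = U_n(x) - U_{n-2}(x)$ and $T_n'(x) = n\,U_{n-1}(x)$. The goal is to isolate a boundary contribution at index $N$, which should match $G_N(x)$ up to a positive multiplicative constant: the terms $I_{N+1}(1)\,U_{N-1}(x)$ and $I_N(1)\,U_{N-2}(x)$ appear as the natural summation-by-parts partners of $T_N$ and $T_{N+1}$, the summand $I_N(1)\,T_N(x)$ arises from re-indexing, and the constant $-I_N(1)$ reflects a normalization produced when the telescoping differences collapse at a reference point.

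Next I would estimate the residual left after the boundary piece. On $(-\infty,-1)$, Theorem \ref{thm_cheb_est} gives $|T_k(x)|, |U_k(x)| \leq \bigl(|x| + \sqrt{x^2-1}\bigr)^k$, while the bound \eqref{eq_bessel_upper} forces $I_k(1)$ to decay super-geometrically in $k$, so the residual series converges absolutely. Crucially, since $T_k(-|x|) = (-1)^k T_k(|x|)$, the terms in the residual alternate in sign on $(-\infty,-1)$; combined with the monotone decay of $I_k(1)$ in $k$, the residual is dominated in magnitude by its leading term and shares its sign. One then checks that the sign pattern of this leading residual term matches the sign of the boundary piece $G_N(x)$ for each parity of $N$: for even $N$ the leading residual index is odd, giving a nonpositive contribution, so $R_N(x) \leq 0$ when $G_N(x) > 0$; for odd $N$ the signs flip, yielding $R_N(x) \geq 0$ when $G_N(x) < 0$. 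These are exactly the two conclusions of the theorem.

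The main obstacle will be the first step: producing the precise algebraic identity that extracts $G_N(x)$ as a clean boundary piece with the exact combination of $I_N(1)$, $I_{N+1}(1)$, $U_{N-1}$, $U_{N-2}$, $T_N$, and the constant $-I_N(1)$. The bookkeeping for the factor of $2$ coming from $a_n = 2 I_n(1)$ for $n \geq 1$, and the collapse of two telescoping differences into the single constant $-I_N(1)$, requires careful tracking; this is the substantive content borrowed from \cite{wodecki2024taylorlike}. Once that identity is established, the sign analysis on $(-\infty,-1)$ via alternation of $T_n$ at negative arguments and monotone decay of $I_n(1)$ is routine, and the two parity statements follow uniformly.
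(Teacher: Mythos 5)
The paper itself does not prove this theorem: its ``proof'' is a citation to \cite{wodecki2024taylorlike}, so there is no internal argument to compare yours against. Judged on its own terms, your proposal has two genuine gaps.

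First, you explicitly defer the central step --- the algebraic identity that extracts $G_N(x)$ from the tail $R_N(x)=2\sum_{n=N+1}^{\infty}I_n(1)T_n(x)$ as the boundary term of a summation by parts --- to the very reference the paper cites. Since the entire content of a reduction theorem is precisely this identity (everything else is bookkeeping), the proposal as written does not prove the statement; it reproduces the paper's citation with a plausible plan attached. The ingredients you name (the recurrence $I_{n-1}(1)-I_{n+1}(1)=2nI_n(1)$, the identities $2T_n=U_n-U_{n-2}$ and $T_n'=nU_{n-1}$) are the natural ones, but until the telescoping is carried out and shown to produce exactly $I_{N+1}(1)U_{N-1}+I_N(1)U_{N-2}-I_N(1)+I_N(1)T_N$ together with a remainder of controlled sign, nothing has been established.

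Second, the sign analysis you call ``routine'' does not go through as stated. You invoke an alternating-series argument: on $(-\infty,-1)$ the terms $I_k(1)T_k(x)$ alternate in sign because $T_k(-\lvert x\rvert)=(-1)^kT_k(\lvert x\rvert)$, and you claim the residual is dominated by its leading term because $I_k(1)$ decays. Domination requires the magnitudes $I_k(1)T_k(\lvert x\rvert)$ to be decreasing in $k$, but by Theorem \ref{thm_cheb_est} together with \eqref{eq_bessel_upper} and \eqref{eq_bessel_lower} one has $I_k(1)T_k(\lvert x\rvert)$ of order $\lvert x\rvert^k/k!$ for large $\lvert x\rvert$, which \emph{increases} in $k$ until $k\approx\lvert x\rvert$. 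Since $x$ ranges over all of $(-\infty,-1)$ while $N$ is fixed, the leading term does not dominate the tail on the far end of the ray, and the conclusion that the residual shares the sign of its leading term fails exactly where the inequality is hardest to control. Indeed, if this alternating-series argument were valid, the hypothesis on the sign of $G_N$ would be superfluous and the theorem would hold unconditionally for every $N$; the fact that the paper must verify the $G_N$ condition separately (via Sturm's theorem, and only for $N\leq 1000$) is a strong indication that no such unconditional argument exists.
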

\begin{proof}
See \cite{wodecki2024taylorlike}.
\end{proof}
The criterion can be applied along with Cauchy root estimates and Sturm's theorem to show that the inequalities do indeed hold for $N \leq 1000$. An analogous statement holds in case the interval $\left[-1,1\right]$ is transformed into $\left[-t,t\right]$ for $t > 1$. In the following, it will be assumed that the conditions of the Reduction theorem hold true and that indeed we do have the inequality \eqref{eq_upper_bound_cheb}, which can be checked in polynomial time (quadratic) using the aforementioned method.

\begin{theorem}\label{thm_chebyshev_product_property}
Let $T_{n}$ denote the $n$-th Chebyshev polynomial of the first kind defined in Definition \ref{def_cheb_poly}. Then the following product identity holds
\begin{equation}\label{eq_cheby_product_formula}
T_{m}\left(x\right)T_{n}\left(x\right)=\frac{1}{2}\left(T_{m+n}\left(x\right)+T_{\left|m-n\right|}\left(x\right)\right),\quad\text{for all }m,n\geq0.
\end{equation}
Furthermore, Chebyshev polynomials may be expressed in the monomial basis as
\begin{equation}\label{eq_cheby_in_monomial}
T_{n}\left(x\right)=\frac{n}{2}\sum_{k=0}^{\left\lfloor \frac{n}{2}\right\rfloor }\left(-1\right)^{k}\frac{\left(n-k-1\right)!}{k!\left(n-2k\right)!}\left(2x\right)^{n-2k},\quad\text{for all }n>0.
\end{equation}

\end{theorem}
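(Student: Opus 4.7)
The plan is to handle the two statements separately. For the product identity \eqref{eq_cheby_product_formula}, I will use the classical trigonometric representation of Chebyshev polynomials of the first kind, namely $T_n(\cos\theta) = \cos(n\theta)$, which follows by an easy induction from the defining recurrence in Definition~\ref{def_cheb_poly}. Once this is in hand, the elementary product-to-sum identity
\begin{equation}
\cos(m\theta)\cos(n\theta)=\tfrac{1}{2}\bigl(\cos((m+n)\theta)+\cos(|m-n|\theta)\bigr)
\end{equation}
immediately yields
\begin{equation}
T_{m}(\cos\theta)T_{n}(\cos\theta)=\tfrac{1}{2}\bigl(T_{m+n}(\cos\theta)+T_{|m-n|}(\cos\theta)\bigr)
\end{equation}
for every $\theta\in\mathbb{R}$. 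Since $\cos\theta$ ranges over the whole interval $[-1,1]$, the polynomial identity $T_m(x)T_n(x)=\tfrac{1}{2}(T_{m+n}(x)+T_{|m-n|}(x))$ holds on an infinite set, and hence everywhere on $\mathbb{R}$.

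For the monomial expansion \eqref{eq_cheby_in_monomial}, my plan is to proceed by strong induction on $n$, using the three-term recurrence $T_{n+1}(x)=2xT_n(x)-T_{n-1}(x)$. The base cases $n=1,2$ can be verified directly from the closed form. For the induction step, I would substitute the claimed expansions for $T_n$ and $T_{n-1}$ into the right-hand side of the recurrence, collect the coefficient of $(2x)^{n+1-2k}$, and check that it equals
\begin{equation}
\frac{n+1}{2}\cdot(-1)^{k}\,\frac{(n-k)!}{k!\,(n+1-2k)!}.
\end{equation}
This reduces, after clearing the common prefactor, to a purely arithmetic identity between binomial-type expressions: one must verify
\begin{equation}
\frac{n-k-1)!}{k!(n-2k)!}+\frac{(n-k-2)!}{(k-1)!(n-2k)!}\;=\;\frac{n+1}{n}\cdot\frac{(n-k)!}{k!(n+1-2k)!}\cdot\text{(correction factors)},
\end{equation}
which simplifies by pulling out $(n-k-2)!/(k-1)!(n-2k)!$ and using $(n-k-1)+k=n-1$.

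The main obstacle is the bookkeeping in this induction step: keeping track of the shifted index ranges, the change in $\lfloor n/2\rfloor$ as $n$ increases by one, and the boundary contributions when $k=0$ or $k=\lfloor(n+1)/2\rfloor$ (for which one of the two terms in $2xT_n-T_{n-1}$ is absent). Alternatively, if this combinatorial verification becomes cumbersome, I would fall back on the generating-function identity $\sum_{n\geq 0}T_n(x)z^n=(1-xz)/(1-2xz+z^2)$, expand the right-hand side as a geometric-style series in $z$, and read off the coefficient of $z^n$ directly; this bypasses the inductive arithmetic entirely and produces the stated binomial formula. Either route gives \eqref{eq_cheby_in_monomial}.
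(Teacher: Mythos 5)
Your proposal is correct and is essentially the ``direct calculation'' the paper invokes without writing out: the trigonometric substitution $T_n(\cos\theta)=\cos(n\theta)$ plus product-to-sum gives \eqref{eq_cheby_product_formula} on $[-1,1]$, and the polynomial-identity argument extends it to all of $\mathbb{R}$; the induction via the three-term recurrence gives \eqref{eq_cheby_in_monomial}. One small correction to your sketch of the induction step: after the index shift $j=k-1$ the contribution from $T_{n-1}$ carries $(n-k-1)!$, not $(n-k-2)!$, and the identity that closes the induction is simply $(n+1)(n-k)=n(n+1-2k)+(n-1)k$, so no ``correction factors'' are needed.
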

\begin{proof}
Follows by direct calculation.
\end{proof}

\subsection{Flat Approximations of the Exponential using Chebyshev Series using a Product Construction}\label{sec_chebyshev_product_flat}
Drawing on \cite{bakshi2023learning, Gil19, tang2023cs} we define a product of Chebyshev series the the following way.

\begin{definition}
    Let $k,l\in\mathbb{N}$ then we call
    \begin{equation}\label{eq_flat_approx_def}
Q_{k,l}\left(x\right)\coloneqq f_{2l}\left(\frac{x}{k}\right)f_{4l}\left(\frac{x}{k}\right)\ldots f_{2^{k}l}\left(\frac{x}{k}\right)=\prod_{i=1}^{k}f_{2^{i}l}\left(\frac{x}{k}\right)
    \end{equation}
    the iteratively truncated Chebyshev series approximation of $e^{x}$ with parameters $k$ and $l$.
\end{definition}
Next, the key theorem that shows the conditions under which a polynomial of the form \eqref{eq_flat_approx_def} is a flat approximation
\begin{theorem}\label{thm_flat_approx_theorm}
Let $\eta\in\left(0,1\right)$ and $\epsilon < 1$ be given and let $Q_{k,l}\left(x\right)$ be defined by \eqref{eq_flat_approx_def}. Then $Q_{k,l}\left(x\right)$ is a flat approximation if $k$ and $l$ satisfy
\begin{equation}\label{eq_setting_k}
k=\left\lceil \frac{1}{\eta}\right\rceil 
\end{equation}
and
\begin{equation}\label{eq_setting_l}
l\coloneqq\left\lceil \frac{1}{2}\left(\left(e+1\right)t+\ln\frac{k}{\epsilon}\right)\right\rceil. 
\end{equation}

\end{theorem}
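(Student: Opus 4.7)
My plan is to verify the two defining conditions of a flat approximation (Definition \ref{def_flat_approx}) independently, exploiting the multiplicative structure of $Q_{k,l}$. For the approximation bound, I would factor $e^x = \prod_{i=1}^k e^{x/k}$ and apply the telescoping identity
\begin{equation*}
\prod_{i=1}^k a_i - \prod_{i=1}^k b_i = \sum_{i=1}^k (a_i - b_i) \prod_{j<i} a_j \prod_{j>i} b_j,
\end{equation*}
with $a_i = f_{2^i l}(x/k)$ and $b_i = e^{x/k}$. For $x \in [-t, t]$ the Chebyshev truncation estimate from \cite{Gil19} gives $|a_i - b_i| \leq \exp(et/k - 2^i l)$, so the choice $l = \lceil \tfrac{1}{2}((e+1)t + \ln(k/\epsilon))\rceil$ drives the leading-order error $|a_1 - b_1|$ down to at most $e^{-t}\epsilon/k$, with the remaining errors decaying super-exponentially in $i$ thanks to the doubling schedule $N_i = 2^i l$. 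The remaining factors in each telescoping summand are uniformly bounded by $e^{(k-1)|x|/k} \leq e^{(k-1)t/k}$, and summing the $k$ terms yields $|Q_{k,l}(x) - e^x| \leq \epsilon$.

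For the flatness bound I would bound each factor $|f_{2^i l}(x/k)|$ pointwise in two regimes. On the approximation interval $x/k \in [-t/k, t/k]$ the Chebyshev error estimate itself already gives $|f_{2^i l}(x/k)| \leq e^{x/k}(1 + \epsilon_i)$. Outside the approximation interval I would combine the Bessel coefficient bound $|a_n(t/k)| \leq 2\cosh(t/k)(t/(2k))^n/n!$ from \eqref{eq_bessel_upper} with the growth estimate of Theorem \ref{thm_cheb_est}; summing the resulting geometric series in $n$ yields a pointwise envelope $|f_{2^i l}(x/k)| \leq C(t/k)\, e^{|x|/k}$ with $C(t/k)$ tending to $1$ as $k$ grows. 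The $k$-fold product then gives $|Q_{k,l}(x)| \leq C(t/k)^k e^{|x|}$, and the choice $k = \lceil 1/\eta \rceil$ is exactly what converts the per-factor slack into the global envelope $e^{\eta|x|}$. For $x$ negative and outside the interval the Reduction theorem (Theorem \ref{thm_reduction_theorem}) is used to guarantee that each even-degree factor $f_{2^i l}(x/k)$ keeps a definite sign, so that no spurious sign cancellation inflates $|Q_{k,l}(x)|$ relative to the envelope.

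The principal obstacle is the flatness estimate for negative $x$ outside $[-t, t]$. In the Taylor-based construction of \cite{bakshi2023learning}, the bound $|T_N(y)| \leq e^{|y|}$ is immediate from the triangle inequality, but the Chebyshev truncation $f_N$ admits no such clean pointwise envelope: oscillations of the Chebyshev polynomials can in principle magnify the product far from the origin. Taming them requires the Bessel-based coefficient decay combined with the sign information supplied by the Reduction theorem, and then a careful check that the multiplicative constant $C(t/k)$ absorbed per factor is close enough to $1$ for the $k$-fold product to remain inside the flatness envelope. This is precisely the calibration encoded in the coefficient $(e+1)$ (rather than the bare $e$) in the formula for $l$, since the extra $t$ in $2l \geq (e+1)t + \ln(k/\epsilon)$ is exactly what pays for the product of remaining factors in the telescoping sum.
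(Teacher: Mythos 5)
Your treatment of the approximation error on $[-t,t]$ matches the paper's: the same telescoping decomposition, the same use of the Chebyshev truncation bound per factor, and the same accounting for the $(e+1)t$ term in $l$ (the extra $t$ absorbs the $e^{tj}$ prefactors in the telescoped sum). That half of the proposal is sound.

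The flatness argument for negative $x$ outside $[-t,t]$, however, has a genuine gap. You bound each factor by a pointwise envelope $\left|f_{2^il}\left(\frac{x}{k}\right)\right|\leq C\left(\frac{t}{k}\right)e^{\left|x\right|/k}$ and multiply, obtaining $\left|Q_{k,l}\left(x\right)\right|\leq C^{k}e^{\left|x\right|}$. For $x<0$ the flatness condition of Definition \ref{def_flat_approx} demands $\left|Q_{k,l}\left(x\right)\right|\leq e^{\eta\left|x\right|}$, and $e^{\left|x\right|}$ exceeds this by the exponentially large factor $e^{\left(1-\eta\right)\left|x\right|}$; no choice of $k=\lceil 1/\eta\rceil$ repairs this, since $\left(Ce^{\left|x\right|/k}\right)^{k}=C^{k}e^{\left|x\right|}$, not $C^{k}e^{\left|x\right|/k}$. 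The point you are missing is that the whole product, not each factor, must be bounded by $e^{\left|x\right|/k}$: collectively only one factor's worth of growth is permitted. The paper achieves this by using the Reduction theorem to get the one-sided bound $f_{2^{j}l}\left(\frac{x}{k}\right)\leq e^{x/k}+\frac{2}{\left(2^{j}l\right)!}\left(\frac{x}{k}\right)^{2^{j}l}$ for $x\leq -t$, and then splitting the indices $j$ into three regimes: for $j\leq j_{1}$ the exponential term dominates and the factor is at most $2e^{x/k}\leq 2$ (tiny, since $x$ is very negative); for $j\geq j_{0}$ the factor is at most $1$; and only the middle range $j_{1}<j\leq j_{0}$ contributes polynomial growth, whose product collapses to a single factorial-suppressed term $\frac{\left(x/k\right)^{2^{j_{0}}l-2^{j_{1}}l}}{\left(2^{j_{0}}l-2^{j_{1}}l\right)!}\leq e^{\left|x\right|/k}$. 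Without this regime analysis the per-factor-envelope strategy cannot close the argument, so the central claim of the theorem remains unproved in your proposal.
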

\begin{proof}
Since all of the roots of the polynomials $T_{n}\left(\frac{x}{t}\right)$ are confined to $\left[-t,t\right]$ and 
\begin{equation}
\sum_{n=0}^{N}a_{n}\left(t\right)T_{n}\left(\frac{x}{t}\right)\stackrel{N\rightarrow\infty}{\rightarrow}e^{x}\text{ for all }x\in\mathbb{R}   
\end{equation}
it follows that
\begin{equation}
\sum_{n=0}^{N}a_{n}\left(t\right)T_{n}\left(\frac{x}{t}\right)\leq e^{x}    \quad x \geq t,   
\end{equation}
which results an approximation of the form \eqref{eq_flat_approx_def} always being dominated by $e^{x}$ regardless of $k$ and $l$.

It remains to prove the upper bound for $x \leq -t$. Using Theorem \ref{thm_reduction_theorem} one may make the estimate
\begin{align}
    f_{2^{j}l}\left(\frac{x}{k}\right) &=f_{2^{j}l-1}\left(\frac{x}{k}\right)+a_{2^{j}l}T_{2^{j}l}\left(\frac{x}{kt}\right) \\
    & \leq e^{\frac{x}{k}}+\frac{2}{\left(2^{j}l\right)!}\left(\frac{t}{2}\right)^{2^{j}l}2^{2^{j}l}\left(\frac{x}{tk}\right)^{2^{j}l} \\
    & \leq e^{\frac{x}{k}}+\frac{2}{\left(2^{j}l\right)!}\left(\frac{x}{k}\right)^{2^{j}l}\label{eq_important_est} \\
    & \leq e^{\frac{x}{k}}+\left(\frac{3x}{2^{j}lk}\right)^{2^{j}l},
\end{align}
where the estimate for modified Bessel functions, Chebyshev polynomials and the Stirling approximation were used (see \eqref{eq_bessel_upper} and \eqref{eq_chebyshev_upper_bound}). Defining $j_{0}$ as the smallest possible integer such that $\frac{3x}{2^{j_{0}kl}}<\frac{1}{2}$ implies that 
\begin{equation}
f_{2^{j}l}\left(\frac{x}{k}\right)\leq1 \quad \text{ for all }j\geq j_{0}.\label{eq_first_est}
\end{equation}
Reusing \eqref{eq_important_est} to investigate the terms $j < j_{0}$ we observe that
\begin{equation}
x\geq\frac{2^{j_{1}}lk}{e}\Longrightarrow e^{\frac{x}{k}}\leq\left(\frac{3x}{2^{j_{1}}lk}\right)^{2^{j_{1}}l},
\end{equation}
therefore there exists a $j_{1}$ such that $0<j_{1}<j_{0}$ and
\begin{align}
f_{2^{j}l}\left(\frac{x}{k}\right) &\leq2e^{\frac{x}{k}} &&\text{ for }j\leq j_{1} \label{eq_second_est}    \\
f_{2^{j}l}\left(\frac{x}{k}\right) &\leq\frac{4}{\left(2^{j}l\right)!}\left(\frac{x}{k}\right)^{2^{j}l} && \text{ for }j_{1}<j\leq j_{0}. \label{eq_third_est}
\end{align}
Utilizing \eqref{eq_first_est}, \eqref{eq_second_est} and \eqref{eq_third_est} to overestimate $Q_{k,l}\left(x\right)$ results in
\begin{equation}
Q_{k,l}\left(x\right)\leq2^{j_{0}}\prod_{j=j_{1}}^{j_{0}}\frac{1}{\left(2^{j}l\right)!}\left(\frac{x}{k}\right)^{2^{j}l}=\frac{2^{j_{0}}\left(\frac{x}{k}\right)^{2^{j_{0}}l-2^{j_{1}}l}}{\left(2^{j_{1}}l\right)!\left(2^{j_{1}+1}l\right)!\ldots\left(2^{j_{0}}l\right)!}.\label{eq_alomost_done_flat}
\end{equation}
Since $\frac{2^{j_{0}}}{\left(2^{j_{1}}l\right)!\left(2^{j_{1}+1}l\right)!\ldots\left(2^{j_{0}-1}l\right)}\leq1$ and $\left(2^{j_{0}}l\right)!\geq\left(2^{j_{0}}l-2^{j_{1}}l\right)!$ the right hand side of \eqref{eq_alomost_done_flat} may be reformulated as
\begin{equation}
Q_{k,l}\left(x\right)\leq\frac{\left(\frac{x}{k}\right)^{2^{j_{0}}l-2^{j_{1}}l}}{\left(2^{j_{0}}l-2^{j_{1}}l\right)!}\leq e^{\frac{\left|x\right|}{k}}, \label{eq_flatness_result}
\end{equation}
which shows that $Q_{k,l}\left(x\right)$ is a flat approximation for any $\eta \geq \frac{1}{k}$.

Lastly, a proof of approximation accuracy is provided. Considering an approximation of the form \eqref{eq_flat_approx_def} once again, we may make the estimate

\begin{align}
\left|Q_{k,l}\left(x\right)-e^{x}\right| &=\left|\prod_{j=1}^{j_{0}}f_{2^{j}l}\left(\frac{x}{k}\right)-\underbrace{e^{\frac{x}{k}}e^{\frac{x}{k}}\ldots e^{\frac{x}{k}}}_{k\text{ times}}\right| \\
& \leq\left|f_{2l}\left(\frac{x}{k}\right)\right|\left|\prod_{j=2}^{j_{0}}f_{2^{j}l}\left(\frac{x}{k}\right)-\underbrace{e^{\frac{x}{k}}e^{\frac{x}{k}}\ldots e^{\frac{x}{k}}}_{k-1\text{ times}}\right| \\
& +\left|e^{\frac{k-1}{k}x}\right|\left|f_{2l}\left(\frac{x}{k}\right)-e^{\frac{x}{k}}\right| \\
&\leq\cdots\leq\sum_{j=1}^{k}e^{tj}\left|f_{2^{j}l}\left(\frac{x}{k}\right)-e^{\frac{x}{k}}\right|.
\end{align}

To complete the proof, we set $k$ for a given $\eta$ according to \eqref{eq_flatness_result} i.e. $k=\left\lceil \frac{1}{\eta}\right\rceil $, then by setting 
\begin{equation}
l\coloneqq\left\lceil \frac{1}{2}\left(et+\ln\frac{ke^{t}}{\epsilon}\right)\right\rceil =\left\lceil \frac{1}{2}\left(\left(e+1\right)t+\ln\frac{k}{\epsilon}\right)\right\rceil     
\end{equation}
we ensure that the approximation error is at most $\epsilon$ within $\left[-t,t\right]$.
\end{proof}

Unlike the Taylor based construction \cite{bakshi2023learning}, the coefficients of the flat approximation based on Chebyshev series expansions can not be counted directly, due to the more complex structure of Chebyshev polynomials. To prove the boundedness (in the sense of Definition \ref{def_poly_boundedness}) of a flat approximation given by \eqref{eq_flat_approx_def} an intermediate theorem is required.

\begin{theorem}
Let $N,M \in \mathbb{N}$ such that $N>M$ and let $a_{0},a_{1},\ldots,a_{M}$, $b_{0},b_{1},\ldots,b_{N}$ be sequences or real numbers, then
\begin{align}
\left(\sum_{m=0}^{M}a_{m}T_{m}\right)\left(\sum_{n=0}^{N}b_{n}T_{n}\right)= & \label{eq_product_identity_cheb} \\
\frac{1}{2}\left(a_{0}b_{0}+\sum_{i=1}^{M}a_{i}b_{i}\right)T_{0} & +\frac{1}{2}\left[\sum_{k=1}^{N}\left(\left(\sum_{i=0}^{\min\left\{ k,M\right\} }a_{i}b_{k-i}\right)+\sum_{i=k}^{M}a_{i}b_{i-k}+\sum_{i=k}^{\min\left\{ N,M+k\right\} }a_{i-k}b_{i}\right)T_{k}\right]. \notag 
\end{align}
\end{theorem}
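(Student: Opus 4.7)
The plan is to expand the product directly, apply the Chebyshev multiplication identity $T_m(x) T_n(x) = \tfrac{1}{2}(T_{m+n}(x) + T_{|m-n|}(x))$ from Theorem \ref{thm_chebyshev_product_property}, and then reorganise the resulting double sum by collecting the coefficients of each $T_k$. The first step is simply to distribute and apply the identity to obtain
\begin{equation*}
\left(\sum_{m=0}^{M} a_m T_m\right)\left(\sum_{n=0}^{N} b_n T_n\right) = \frac{1}{2}\sum_{m=0}^{M}\sum_{n=0}^{N} a_m b_n \bigl(T_{m+n} + T_{|m-n|}\bigr).
\end{equation*}

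Next, for each fixed target index $k \geq 1$, I would enumerate the pairs $(m,n) \in \{0,\ldots,M\} \times \{0,\ldots,N\}$ whose contribution lands on $T_k$. These split into three disjoint classes: (a) pairs with $m + n = k$, contributing through the $T_{m+n}$ summand; (b) pairs with $m > n$ and $m - n = k$, contributing through the $T_{|m-n|}$ summand; and (c) pairs with $n > m$ and $n - m = k$, also contributing through the $T_{|m-n|}$ summand. Reparameterising (a) by $i = m$ produces the convolution $\sum_{i=0}^{\min\{k,M\}} a_i b_{k-i}$, where the upper limit encodes the constraints $m \leq M$ and $n = k - i \geq 0$. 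Reparameterising (b) by $i = m$ and (c) by $i = n$ yields the two correlation sums $\sum_{i=k}^{M} a_i b_{i-k}$ and $\sum_{i=k}^{\min\{N, M+k\}} a_{i-k} b_i$, whose upper limits come from $m \leq M$ in the former, and from $n \leq N$ together with $m = i - k \leq M$ in the latter.

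For $k = 0$ the case split partially degenerates and needs to be treated separately: the $m + n = 0$ subset contains only $(0,0)$, while $|m - n| = 0$ collapses to the full diagonal $m = n$ with $0 \leq m \leq \min\{M, N\} = M$ (using the standing assumption $N > M$). Summing these two diagonal contributions and restoring the prefactor $\tfrac{1}{2}$ gives the claimed $T_0$-coefficient.

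The only real obstacle here is bookkeeping: verifying that the three classes above partition the set of contributing pairs exactly once, and that the $\min$ expressions in the statement arise mechanically from the index constraints after reparameterisation. Organising the enumeration by the mutually exclusive conditions $m > n$, $m < n$, and $m + n = k$ (plus the isolated $m = n$ diagonal handled in the $k=0$ case) is what keeps the argument routine despite the multiple shifted index ranges involved.
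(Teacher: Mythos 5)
Your proof is correct and follows essentially the same route as the paper's: expand the double sum, apply the product identity \eqref{eq_cheby_product_formula}, and collect the coefficient of each $T_k$ via the same three-way split into $m+n=k$, $m-n=k$, and $n-m=k$, with the diagonal handled separately for $k=0$. One caveat: your (correct) $k=0$ computation yields $\frac{1}{2}\left(a_0b_0+\sum_{i=0}^{M}a_ib_i\right)$, with the diagonal sum starting at $i=0$ exactly as the paper's own proof derives, which does not literally match the statement's $\sum_{i=1}^{M}$; that appears to be a typo in the theorem rather than a flaw in your argument, so you should not claim the statement's coefficient is ``the claimed'' one without flagging the discrepancy.
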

\begin{proof}
Multiplying out the product and using the identity \eqref{eq_cheby_product_formula} yields
\begin{align}
\left(\sum_{m=0}^{M}a_{m}T_{m}\right)\left(\sum_{n=0}^{N}b_{n}T_{n}\right) &= \sum_{m=0}^{M}\sum_{n=0}^{N}a_{m}b_{n}T_{m}T_{n} \\
& =\frac{1}{2}\sum_{m=0}^{M}\sum_{n=0}^{N}a_{m}b_{n}\left(T_{m+n}+T_{\left|m-n\right|}\right),
\end{align}
where the variable was dropped for the sake of readability.
The expression now needs to be arranged by Chebyshev degree. Collecting the terms associated with $T_{m+n}$ and ordering them by degree results in
\begin{equation}\label{eq_proof_of_product_1}
\frac{1}{2}\sum_{k=0}^{N}\left(\sum_{i=0}^{\min\left\{ k,M\right\} }a_{i}b_{k-i}\right)T_{k}=\frac{1}{2}a_{0}b_{0}T_{0}+\frac{1}{2}\sum_{k=1}^{N}\left(\sum_{i=0}^{\min\left\{ k,M\right\} }a_{i}b_{k-i}\right)T_{k}.
\end{equation}
The rearrangement of the coefficients of the terms  $T_{\left|m-n\right|}$ is discussed for the two following cases
\begin{equation}
    \left|m-n\right| = 0 \text{ and } \left|m-n\right|=k\in\mathbb{N}\Rightarrow m-n=k\vee m-n=-k.
\end{equation}
The first condition produces the sum 
\begin{equation}\label{eq_proof_of_product_2}
    \frac{1}{2}\left(\sum_{i=0}^{M}a_{i}b_{i}\right)T_{0},
\end{equation}
while the latter condition lead to the sums
\begin{equation}\label{eq_proof_of_product_3}
\frac{1}{2}\left[\sum_{k=1}^{N}\left(\sum_{i=k}^{M}a_{i}b_{i-k}\right)T_{k}\right] \text{ and } \frac{1}{2}\left[\sum_{k=1}^{N}\left(\sum_{i=k}^{\min\left\{ N,M+k\right\} }a_{i-k}b_{i}\right)T_{k}\right] 
\end{equation}
for $m-n=k$ and $m-n=-k$ respectively. Summing up \eqref{eq_proof_of_product_1}, \eqref{eq_proof_of_product_2} and \eqref{eq_proof_of_product_3} gives \eqref{eq_product_identity_cheb}.
\end{proof}

Building on the previous statement, we are able to show that for the special case in which the approximations are of the form $f_{2^{j}l}\left(\frac{x}{k}\right)$ (see \eqref{eq_flat_approx_def}) the product formula may be simplified. 
\begin{definition}\label{def_super_exp_decay}
We call a series of coefficients $x_{n}$ decaying regularly if there exist constants $C_{1}$ and $C_{2}$ such that
\begin{equation}
\frac{C_{1}}{\Gamma\left(n+1\right)}\left(\frac{t}{2}\right)^{n}\leq x_{n}\leq\frac{C_{2}}{\Gamma\left(n+1\right)}\left(\frac{t}{2}\right)^{n},
\end{equation}
for all $n = 0,1,2, \dots , N$ for some $N\in \mathbb{N}$ and some $t \geq 1$.
\end{definition}
Note that the coefficients of the Chebyshev series expansion of $e^x$ are a super-exponentially decaying sequence of coefficients due to \eqref{eq_bessel_lower} and \eqref{eq_bessel_upper}.

\begin{theorem}\label{thm_product_formula}
Let $a_{n}$ be a sequence of regularly decaying coefficients in the sense of Definition \ref{def_super_exp_decay}. Then there exist constants $h_{i,k}\in\left[0,C\right]$ for each $i,k\in\left\{ 0,1,\ldots,N\right\}$, where $C > 0$ depends on $N$ and the constants $C_1$, $C_2$ governing the exponential decay of the sequence $a_n$ such that
\begin{equation}
\left(\sum_{m=0}^{M}a_{m}T_{m}\right)\left(\sum_{n=0}^{N}a_{n}T_{n}\right)=\sum_{k=0}^{N}\left(\sum_{i=k}^{N}h_{i,k}a_{i}a_{i-l}\right)T_{l}.
\end{equation}
\end{theorem}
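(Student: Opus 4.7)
The plan is to start from the product identity \eqref{eq_product_identity_cheb} specialised to $b_n = a_n$, which writes the product as $\sum_{k=0}^{N} C_k T_k$ with each coefficient $C_k$ naturally decomposed into three groups of summands: terms $a_i a_{k-i}$ with $i \le k$ (group I), terms $a_i a_{i-k}$ with $i \ge k$ from the second inner sum (group II), and terms $a_{i-k} a_i$ with $i \ge k$ from the third inner sum (group III). Groups II and III already have the target shape $a_i a_{i-k}$ and contribute to $h_{i,k}$ with values in $\{0, \tfrac12, 1\}$, which are already uniformly bounded. The $k=0$ slice is handled separately and directly by \eqref{eq_product_identity_cheb}, giving $h_{i,0} \in \{0, \tfrac12\}$.

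The only real obstacle is group I, whose ``wrong'' index pattern $a_i a_{k-i}$ has to be re-expressed via target-shape products. The key step is to use the regular-decay hypothesis (Definition \ref{def_super_exp_decay}) to compare each group-I term with the anchor $a_k a_0$, which is itself of the target form since it equals $a_k a_{k-k}$. A direct calculation with the two-sided bound yields
\begin{equation*}
\frac{a_i a_{k-i}}{a_k a_0} \le \frac{C_2^2}{C_1^2}\binom{k}{i},
\end{equation*}
because the powers of $t/2$ cancel (as $i + (k-i) = k = k + 0$) while the factorials assemble into a binomial coefficient. Summing over $i \in \{0,\dots,\min(k,M)\}$ and using $\sum_i \binom{k}{i} \le 2^k \le 2^N$ shows that group I can be absorbed into the single coefficient $h_{k,k}$ at the cost of a bounded additive increment.

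Putting this together, I would define $h_{i,k}$ for $i>k$ directly from groups II and III, and let $h_{k,k}$ additionally absorb group I via the above estimate. Non-negativity of every $h_{i,k}$ is automatic since the lower bound in Definition \ref{def_super_exp_decay} forces $a_n > 0$, so only non-negative contributions are ever added. Uniform boundedness follows from the binomial estimate, giving a constant of the form $C = 1 + \tfrac{C_2^2}{2C_1^2}\,2^N$ depending only on $N$, $C_1$, $C_2$, as claimed. The remaining work is routine book-keeping of the edge cases $k > M$ (where group II becomes empty) and $\min(N, M+k) < N$ (where group III's upper limit shrinks), in each of which the same argument applies verbatim.
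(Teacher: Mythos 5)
Your proposal is correct and follows essentially the same route as the paper: both start from the product identity \eqref{eq_product_identity_cheb} with $b_n=a_n$, observe that the correlation-type sums are already in the target form, and use the regular-decay hypothesis to absorb the convolution-type sum $\sum_i a_i a_{k-i}$ into the target-shaped coefficients. Your version is in fact more explicit than the paper's, which merely asserts that such an absorption is possible, whereas you pin it down by comparing against the anchor $a_k a_0$ and supplying the binomial-coefficient bound $\frac{C_2^2}{C_1^2}\binom{k}{i}$ and the resulting explicit constant.
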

\begin{proof}
Consider a product of the form $\left(\sum_{m=0}^{M}a_{m}T_{m}\right)\left(\sum_{n=0}^{N}b_{n}T_{n}\right)$ in which $b_{n} = a_{n}$. Recalling \eqref{eq_product_identity_cheb} one notices that for any given $k$ the terms of $\sum_{i=0}^{\min\left\{ k,M\right\} }a_{i}a_{k-i}$ can be either directly found in $\sum_{i=k}^{M}a_{i}a_{i-k}+\sum_{i=k}^{\min\left\{ N,M+k\right\} }a_{i-k}a_{i}
$ or the exponential decay of the sequence may be used to provide $c_{l,i} \in \left[0,\Tilde{C}\right]$ such that
\begin{equation}
\frac{1}{2}\left[\left(\sum_{i=0}^{\min\left\{ k,M\right\} }a_{i}b_{k-i}\right)+\sum_{i=k}^{M}a_{i}b_{i-k}+\sum_{i=k}^{\min\left\{ N,M+k\right\} }a_{i-k}b_{i}\right]=\sum_{i=k}^{M}c_{k,l}a_{i}b_{i-k}+\sum_{i=k}^{\min\left\{ N,M+k\right\} }c_{k,l}a_{i-k}b_{i},
\end{equation}
where $k\in\mathbb{N}_{0}$, which notably includes the $0$-th terms which was left out of the sum before. Utilizing the symmetry of the two expressions one arrives at
\begin{equation}
\left(\sum_{m=0}^{M}a_{m}T_{m}\right)\left(\sum_{n=0}^{N}a_{n}T_{n}\right)=\sum_{k=0}^{N}\left(\sum_{i=k}^{N}h_{i,k}a_{i}a_{i-l}\right)T_{k},
\end{equation}
where $h_{i,k}\in\left[0,C\right]$ for each $i, k$, where $C = 2 \Tilde{C}$.
\end{proof}

Next, the product formula of Theorem \ref{thm_product_formula} may be used to show that the product of two expansions with regularly decaying sequences is again an expansion with exponentially decaying coefficients.

\begin{theorem}\label{thm_stable_w_r_t_reg}
Let $\left(a_{n}\right)$ be a sequence of regularly decaying coefficients. Then the coefficients of the product
\begin{equation}
\left(\sum_{i=k}^{N}h_{i,k}a_{i}a_{i-l}\right)_{k=0,1,\ldots,N} 
\end{equation}
are also regularly decaying.
\end{theorem}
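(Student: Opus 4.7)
The aim is to exhibit constants $C_1',C_2'>0$, depending only on $N$, $t$, $C$, $C_1$, $C_2$, for which
\[
\frac{C_1'}{\Gamma(k+1)}(t/2)^k \;\leq\; c_k \;\leq\; \frac{C_2'}{\Gamma(k+1)}(t/2)^k, \qquad c_k := \sum_{i=k}^{N} h_{i,k}\, a_i\, a_{i-k},
\]
which is exactly the definition of regular decay with parameter $t$. The upper bound is a direct computation using a Gamma function inequality, while the lower bound is obtained by retaining only the single leading diagonal contribution $i=k$.

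For the upper bound, I substitute $h_{i,k}\leq C$ together with $a_n \leq C_2(t/2)^n/\Gamma(n+1)$ from Definition \ref{def_super_exp_decay} into the definition of $c_k$ and change variables by $j=i-k$ to obtain
\[
c_k \;\leq\; C C_2^{2}(t/2)^{k}\sum_{j=0}^{N-k}\frac{(t/2)^{2j}}{\Gamma(j+k+1)\,\Gamma(j+1)}.
\]
The binomial inequality $\Gamma(j+k+1)\geq \Gamma(k+1)\Gamma(j+1)$ (equivalent to $\binom{j+k}{k}\geq 1$) lets me pull a factor of $1/\Gamma(k+1)$ outside; the leftover sum is majorised by the finite constant $\sigma_N := \sum_{j=0}^{N}(t/2)^{2j}/\Gamma(j+1)^{2}$, giving the upper bound with $C_2' = CC_2^{2}\sigma_N$.

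For the lower bound, I exploit the fact that the $h_{i,k}$ are non-negative and the $a_n$ are strictly positive, so every summand is non-negative and I may discard all terms with $i>k$. Inspection of the rearrangement in the proof of Theorem \ref{thm_product_formula} shows that the diagonal coefficient $h_{k,k}$ already absorbs the two identical contributions from $T_{0}T_{k}=T_{k}=T_{k}T_{0}$ in the product identity \eqref{eq_cheby_product_formula} and is therefore bounded below by a universal constant $h_{\min}>0$. Applying the lower bound $a_n \geq C_1(t/2)^n/\Gamma(n+1)$ to both $a_k$ and $a_0$ then yields
\[
c_k \;\geq\; h_{k,k}\, a_k\, a_0 \;\geq\; h_{\min}\, C_1^{2}\cdot\frac{(t/2)^k}{\Gamma(k+1)},
\]
so the lower bound holds with $C_1' = h_{\min}C_1^{2}$.

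The main obstacle is the lower bound: Theorem \ref{thm_product_formula} only asserts $h_{i,k}\in[0,C]$, so an honest proof must re-open that construction and verify that the $i=k$ entry is never erased by the absorption step. I expect this to be straightforward because the two $T_0T_k$ and $T_kT_0$ contributions to the coefficient of $T_k$ are preserved verbatim in the rewriting, giving $h_{k,k}\geq 1$; once this is locked in, every other step above is routine bounding of sums and Gamma functions.
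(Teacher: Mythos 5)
Your proof is correct and, for the upper bound, follows essentially the same route as the paper: after inserting $h_{i,k}\le C$ and the decay bound on $a_n$, the remaining sum $\sum_{j}\frac{(t/2)^{2j+k}}{\Gamma(j+k+1)\Gamma(j+1)}$ is exactly the truncated series for the modified Bessel function $I_k(t)$, and the paper simply invokes the bound \eqref{eq_bessel_upper}, $I_k(t)<\frac{\cosh t}{\Gamma(k+1)}(t/2)^k$, whereas you re-derive the same estimate elementarily via $\Gamma(j+k+1)\ge\Gamma(k+1)\Gamma(j+1)$ and the convergent majorant $\sigma_N$. The two computations are interchangeable; yours avoids citing the Bessel inequality at the cost of a slightly less explicit constant.

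Where you genuinely add value is the lower bound. The paper dismisses it with ``the lower estimate follows in similar fashion,'' but this is not symmetric: Theorem \ref{thm_product_formula} only guarantees $h_{i,k}\in[0,C]$, so a termwise lower bound on the whole sum is unavailable and one must exhibit at least one coefficient bounded away from zero. Your strategy of discarding everything except the diagonal term $i=k$ and tracking the $T_0T_k$ and $T_kT_0$ contributions through the product identity \eqref{eq_cheby_product_formula} to conclude $h_{k,k}\ge 1$ is exactly the missing argument, and your honesty in flagging that this requires re-opening the absorption step of Theorem \ref{thm_product_formula} is warranted; since all the redistributed coefficients there are non-negative, the diagonal entry can only grow, so the claim holds. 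In short: your upper bound matches the paper's, and your lower bound is more complete than the paper's.
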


\begin{proof}
Since the coefficient sequence $\left(a_{n}\right)$ is regularly decaying a transformation of the indices may be used to show that each term in bounded from above by
\begin{align}
h_{i+l,k}a_{i+l}a_{i}\leq\frac{C}{i!\left(i+l\right)!}\left(\frac{t}{2}\right)^{2i+l},    
\end{align}
where $i\in\left\{ 0,1,2,\ldots,N-k\right\}$, where $k$ is the index of the coefficient of the resulting series. Recalling \eqref{eq_bessel_def} and making use of \eqref{eq_bessel_upper} it is clear that 
\begin{equation}
\sum_{i=k}^{N}h_{i,k}a_{i}a_{i-l}\leq CI_{k}\left(t\right)\leq\frac{C\cosh t}{\Gamma\left(k+1\right)}\left(\frac{t}{2}\right)^{k}\leq\frac{2C}{\Gamma\left(k+1\right)}\left(\frac{t}{2}\right)^{k}.
\end{equation}
The lower estimate follows in similar fashion. 
\end{proof}

\begin{corollary}\label{cor_regular_seq}
Let $\left(a_{n}\right)_{n=0,1,\ldots,N_{k}}$ be a regularly decaying sequence (see Definition \ref{def_super_exp_decay}) and let
\begin{equation}
\left(\sum_{m=0}^{N_{1}}a_{m}T_{m}\right)\left(\sum_{n=0}^{N_{2}}a_{n}T_{n}\right)\ldots\left(\sum_{n=0}^{N_{k}}a_{n}T_{n}\right)=\sum_{n=0}^{N_{k}}b_{n}T_{n},
\end{equation}
where $N_{1}\leq N_{2}\leq\dots\leq N_{k}$ for some $k \in \mathbb{N}$. Then $\left(b_{n}\right)_{n=0,1,\ldots,N_{k}}$ is a regulary decaying sequence.
\end{corollary}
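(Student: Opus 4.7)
The plan is to proceed by induction on $k$, using Theorem \ref{thm_stable_w_r_t_reg} as the inductive step. The base case $k=1$ is immediate, since $b_n=a_n$ and $(a_n)$ is regularly decaying by hypothesis.

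For the inductive step, assume that the partial product
\[
\prod_{j=1}^{k-1}\left(\sum_{n=0}^{N_{j}}a_{n}T_{n}\right)=\sum_{n=0}^{N_{k-1}}c_{n}T_{n}
\]
has regularly decaying coefficients $(c_{n})_{n=0,\ldots,N_{k-1}}$, with constants $C_{1}^{(k-1)},C_{2}^{(k-1)}$. Multiplying by the $k$-th factor, whose coefficients form a regularly decaying sequence by assumption, yields the full product. I would then apply the analogue of Theorem \ref{thm_stable_w_r_t_reg} in which the two factors have (possibly different) regularly decaying sequences. Inspecting the proof of that theorem, the only properties used are the upper and lower regularity bounds on each factor's coefficients separately: the Bessel-style bound $\sum h_{i,k}c_{i+l}a_{i}\leq \widetilde{C}\,I_{k}(t)$ then gives the upper estimate $\tfrac{C'}{\Gamma(k+1)}(t/2)^{k}$, and the matching lower estimate follows from the lower regularity bounds by the same manipulation applied to the diagonal terms. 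Ordering the $N_{j}$ so that $N_{1}\leq N_{2}\leq\cdots\leq N_{k}$ ensures that the resulting expansion has degree $N_{k}$ and that each intermediate product is compatible in degree with the next factor.

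The main obstacle is bookkeeping of the regularity constants. Each multiplication scales the upper/lower bounds by a factor of the form $2\widetilde{C}$, where $\widetilde{C}=\widetilde{C}(N_{j},C_{1}^{(j-1)},C_{2}^{(j-1)})$ is the constant provided by Theorem \ref{thm_product_formula} for that multiplication, itself bounded by a polynomial in $N_{j}$ and the incoming constants. Iterating $k-1$ times, one obtains final regularity constants
\[
C_{1}^{(k)},\,C_{2}^{(k)}\;\leq\;\prod_{j=2}^{k}2\widetilde{C}\bigl(N_{j},C_{1}^{(j-1)},C_{2}^{(j-1)}\bigr),
\]
which are finite since $k$ and the $N_{j}$ are fixed. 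Thus the coefficients $(b_{n})_{n=0,\ldots,N_{k}}$ of the full product satisfy
\[
\frac{C_{1}^{(k)}}{\Gamma(n+1)}\left(\frac{t}{2}\right)^{n}\leq b_{n}\leq\frac{C_{2}^{(k)}}{\Gamma(n+1)}\left(\frac{t}{2}\right)^{n},
\]
which is exactly the regular decay property in the sense of Definition \ref{def_super_exp_decay}, completing the induction.
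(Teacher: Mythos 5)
Your proof follows essentially the same route as the paper, whose entire argument is the one-line remark that the result ``follows directly by applying Theorems \ref{thm_product_formula} and \ref{thm_stable_w_r_t_reg} $k-1$ times''; your induction is simply that iteration written out. If anything, you are more careful than the paper: you correctly observe that the intermediate steps require a version of Theorem \ref{thm_stable_w_r_t_reg} for two \emph{different} regularly decaying sequences (the accumulated product coefficients $c_n$ versus the new factor's $a_n$), and you track how the regularity constants compound over the $k-1$ multiplications, both of which the paper leaves implicit.
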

\begin{proof}
Follows directly by applying Theorems \ref{thm_product_formula} and \ref{thm_stable_w_r_t_reg} $k-1$ times.
\end{proof}

Finally, the previously derived results are put to use to show the boundedness of a flat approximation given by \eqref{eq_flat_approx_def}.

\begin{theorem}
Let $k, l \in \mathbb{N}$ and $Q_{k,l}$ be given by \eqref{eq_flat_approx_def} then 
is $\left(2^{k+1}l,C_{k,l}\right)$-bounded, where
\begin{equation}
C_{k,l} = ce^{\frac{t^{2}}{4}}\left(\frac{t}{k}\right)^{2^{k}l},    
\end{equation}
where $c$ depends on the regularity coefficients of the sequence $\left(a_{n}\right)_{n=0,1,\ldots,2^{k}l}$ of coefficients of the Chebyshev series expansion. Furthermore, if $t \leq k$ the bound can be made independent of both $l$ and $k$ and becomes
\begin{equation}\label{eq_bounding_constant_simple}
C = ce^{\frac{t^{2}}{4}}.    
\end{equation}
Lastly, if $t < 1$, then again the bounding constant becomes \eqref{eq_bounding_constant_simple} for any $k,l \in \mathbb{N}$.

\end{theorem}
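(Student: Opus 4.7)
The plan is to pass $Q_{k,l}$ through its Chebyshev representation, use the product formulas of the previous subsection to control the resulting coefficients, and then translate back to the monomial basis where Definition \ref{def_poly_boundedness} lives. A preliminary observation is the degree count: each factor $f_{2^{i}l}(x/k)$ in \eqref{eq_flat_approx_def} has degree $2^{i}l$, and their product therefore has degree $\sum_{i=1}^{k} 2^{i}l = (2^{k+1}-2)l \leq 2^{k+1}l$, which immediately yields the first part of the $(2^{k+1}l, C_{k,l})$-boundedness.

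First I would expand each factor as $f_{2^{i}l}(x/k) = \sum_{n=0}^{2^{i}l} a_{n}(t)\,T_{n}(x/(kt))$, where $a_{n}(t)$ are the Bessel coefficients appearing in \eqref{eq_e_x_chebyshev}. The bounds \eqref{eq_bessel_upper} and \eqref{eq_bessel_lower} make the sequence $(a_{n}(t))$ regularly decaying in the sense of Definition \ref{def_super_exp_decay}. Iterating Corollary \ref{cor_regular_seq} then yields a Chebyshev representation $Q_{k,l}(x) = \sum_{n} b_{n}\,T_{n}(x/(kt))$ in which the coefficient sequence $(b_{n})$ is again regularly decaying, $|b_{n}| \leq c\,(t/2)^{n}/n!$, for some constant $c$ inherited (through $k-1$ applications of Theorem \ref{thm_stable_w_r_t_reg}) from the original Bessel constants.

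The second step is to pass back to the monomial basis by substituting \eqref{eq_cheby_in_monomial} into this expansion. Writing $y = x/(kt)$ and collecting powers of $x$, the coefficient of $x^{q}$ in $Q_{k,l}$ becomes the alternating sum
\[
\beta_{q} = \sum_{j \geq 0} b_{q+2j}\,\frac{q+2j}{2}\,(-1)^{j}\,\frac{(q+j-1)!}{j!\,q!}\,\Bigl(\frac{2}{kt}\Bigr)^{q}.
\]
Applying the regular-decay bound on $b_{q+2j}$ and the elementary inequality $(q+j-1)!/(q+2j-1)! \leq 1/j!$ (verified by observing that $(q+j)(q+j+1)\cdots(q+2j-1) \geq j!$) compresses this to
\[
q!\,|\beta_{q}| \leq \frac{c}{k^{q}}\sum_{j \geq 0}\frac{(t/2)^{2j}}{(j!)^{2}} = \frac{c\,I_{0}(t)}{k^{q}} \leq \frac{c\,e^{t^{2}/4}}{k^{q}},
\]
where the last step uses the termwise majorisation $1/(j!)^{2} \leq 1/j!$ to obtain $I_{0}(t) \leq e^{t^{2}/4}$. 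Upper bounding $1/k^{q}$ over the Chebyshev index range by $(t/k)^{2^{k}l}$ recovers the advertised $C_{k,l} = c\,e^{t^{2}/4}(t/k)^{2^{k}l}$. In the case $t \leq k$ we have $(t/k)^{2^{k}l} \leq 1$ and the constant collapses to $c\,e^{t^{2}/4}$; the case $t < 1$ gives the same universal bound for any $k,l$ because $(t/k)^{2^{k}l} \leq t^{2^{k}l} < 1$.

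The main obstacle I expect is the bookkeeping of the regularity constant $c$ through the $k$-fold application of Theorem \ref{thm_stable_w_r_t_reg}, since each multiplication can inflate the constant by a factor on the order of $\cosh(t)$, and one must verify that the cumulative constant has the compact form appearing in $C_{k,l}$ (in particular, independent of $l$ in the stated special cases). A secondary subtlety is that taking absolute values inside the alternating sum \eqref{eq_cheby_in_monomial} discards possible cancellation; this is harmless for an upper bound but it is important that the resulting series matches the Bessel generating function $I_{0}(t)$ exactly, so that the clean $e^{t^{2}/4}$ factor is obtained rather than some coarser $t$-dependent expression.
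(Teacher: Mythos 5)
Your overall strategy coincides with the paper's: iterate Corollary \ref{cor_regular_seq} to obtain a regularly decaying Chebyshev coefficient sequence for the product, convert to the monomial basis via \eqref{eq_cheby_in_monomial}, and majorize the resulting alternating sum by a Bessel-type series that evaluates to $e^{t^{2}/4}$. Your intermediate inequality $(q+j-1)!/(q+2j-1)!\leq 1/j!$ is in fact cleaner than the paper's corresponding step (which only asserts $\frac{(i+2k)(i+k-1)!}{2(i+2k)!}\leq 1$ ``for all $i$ and sufficiently large $k$''), and your explicit degree count, absent from the paper, is a welcome addition.

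The one step that does not hold as written is the last one: ``upper bounding $1/k^{q}$ over the Chebyshev index range by $(t/k)^{2^{k}l}$'' is false whenever $t<k$, since $1/k^{q}$ attains its maximum $1$ at $q=0$ while $(t/k)^{2^{k}l}$ can be far smaller. The discrepancy traces to normalization: you correctly expand $f_{2^{i}l}(x/k)=\sum_{n}a_{n}(t)T_{n}(x/(kt))$, consistent with \eqref{eq_e_x_chebyshev}, which yields a per-coefficient factor $1/k^{q}$; the paper instead works with $T_{n}(x/k)$ and obtains $(t/k)^{q}$, and only with the latter does maximizing over $q$ produce the stated $C_{k,l}=ce^{t^{2}/4}(t/k)^{2^{k}l}$ (which exceeds $ce^{t^{2}/4}$ only when $t>k$). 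With your normalization the bound $q!\,|\beta_{q}|\leq ce^{t^{2}/4}/k^{q}\leq ce^{t^{2}/4}$ already gives \eqref{eq_bounding_constant_simple} unconditionally, so the substance of your argument is sound and in fact stronger than the stated theorem in the regime $t>k$; you should simply delete the spurious inequality and state the constant you actually obtain. Your flagged concern about the growth of $c$ through the $k-1$ applications of Theorem \ref{thm_stable_w_r_t_reg} is legitimate and is not resolved in the paper's proof either, which absorbs it into the unquantified dependence of $c$ on the regularity constants.
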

\begin{proof}
Using Corollary \ref{cor_regular_seq} and
\eqref{eq_cheby_in_monomial} to write the resulting polynomial in the monomial basis results in
\begin{align}
\sum_{n=0}^{2^{k}l}b_{n}T_{n}\left(\frac{x}{k}\right) &=\sum_{n=0}^{2^{k}l}b_{n}\frac{n}{2}\sum_{k=0}^{\left\lfloor \frac{n}{2}\right\rfloor }\left(-1\right)^{k}\frac{\left(n-k-1\right)!}{k!\left(n-2k\right)!}\left(2\frac{x}{k}\right)^{n-2k} \\
& =\sum_{i=0}^{2^{k}l}\left(\frac{x}{k}\right)^{i}\sum_{k=0}^{2^{k-1}l-\left\lfloor \frac{i}{2}\right\rfloor }b_{i+2k}\frac{i+2k}{2}\left(-1\right)^{k}2^{i}\frac{\left(i+k-1\right)!}{k!i!} \\
& =\sum_{i=0}^{2^{k}l}x^{i}\frac{1}{i!}\left(\frac{2}{k}\right)^{i}\sum_{k=0}^{2^{k-1}l-\left\lfloor \frac{i}{2}\right\rfloor }b_{i+2k}\frac{i+2k}{2}\left(-1\right)^{k}\frac{\left(i+k-1\right)!}{k!},
\end{align}
where a substitution was applied to resolve expose the coefficients in the monomial basis. Using the regular decay of sequence $\left(b_{n}\right)_{n=0,1,\ldots,2^kl}$ allows for the estimate
\begin{align}
\left|\sum_{k=0}^{2^{k-1}l-\left\lfloor \frac{i}{2}\right\rfloor }b_{i+2k}\frac{i+2k}{2}\left(-1\right)^{k}\frac{\left(i+k-1\right)!}{k!}\right| & \leq\sum_{k=0}^{2^{k-1}l-\left\lfloor \frac{i}{2}\right\rfloor }\frac{1}{\left(i+2k\right)!}\left(\frac{t}{2}\right)^{i+2k}\left|\frac{i+2k}{2}\right|\frac{\left(i+k-1\right)!}{k!} \\
& =c_{1}\sum_{k=0}^{2^{k-1}l-\left\lfloor \frac{i}{2}\right\rfloor }\frac{t^{i+2k}}{2^{i+2k}k!}\frac{\left(i+2k\right)\left(i+k-1\right)!}{2\left(i+2k\right)!}\\
& \leq c_{2}\sum_{k=0}^{\infty}\frac{t^{i+2k}}{2^{i+2k}k!}, \label{eq_final_res_of_bd_in_proof}
\end{align}
where $c_{1}$ is the constant due to regular decay and $c_{2}$ is a consequence of applying
\begin{equation}
\frac{\left(i+2k\right)\left(i+k-1\right)!}{2\left(i+2k\right)!} \leq 1 \text{ for all $i$ and sufficientely large $k$}.
\end{equation}
 Finally, the sum in \eqref{eq_final_res_of_bd_in_proof} may be resolved as
\begin{equation}
\sum_{k=0}^{\infty}\frac{t^{i+2k}}{2^{i+2k}k!}=2^{-i}e^{\frac{t^{2}}{4}}t^{i},
\end{equation}
which shows that the $i$-th monomial coefficient is bounded by
\begin{equation}
c\frac{1}{i!}e^{\frac{t^{2}}{4}}\left(\frac{t}{k}\right)^{i}.
\end{equation}
The statement of the theorem directly follows.

\end{proof}

\subsection{General Setting and Essential Notation}
Assume a $2^{n}$-level quantum system is described by a Hamiltonian of the form
\begin{equation}\label{eq_exact_basic_ham}
H=\sum_{a=1}^{m}\lambda_{a}E_{a},    
\end{equation}
where each $E_{a}$ is a tensor product of Pauli matrices and $\lambda_{a}\in\left[-1,1\right]$. Additionally, assume that all Hamiltonians of the form \eqref{eq_exact_basic_ham} satisfy $\left\Vert E_{a}\right\Vert \leq1,\left|\lambda_{i}\right|\leq1$. Supposing that the system in question is in thermal equilibrium at the inverse temperature $\beta$, we may assume that the density matrix describing the system reads
\begin{equation}\label{eq_gibbs_state}
\rho=\frac{e^{-\beta H}}{\text{Tr}\left(e^{-\beta H}\right)}.
\end{equation}
The polynomial optimization formulation given in the following section presents a way to estimate coefficients $\lambda_{i}$ by coefficients $\widehat{\lambda}_{i}$ with high probability assuming the ability to prepare copies of the Gibbs state \eqref{eq_gibbs_state}.
A key property that allows the polynomial-time identifiability of a system described by \eqref{eq_exact_basic_ham} is the concept of limited interactability of each of the constituents of the Hamiltonian, which is formalized by the following definitions.
\begin{definition}
A Hamiltonian of the form \eqref{eq_exact_basic_ham} is $k$-local if at most $k$ of the terms are non-trivial Pauli matrices. 
\end{definition}
\begin{definition}
A dual interaction graph $G$ associated with a $k$-local Hamiltonian $H$ is an undirected graph with vertices labeled by $\left\{ 1,\ldots,m\right\} $ and edges between vertices $a,b\in\left\{ 1,\ldots,m\right\} $ if $E_{a}$ and $E_{b}$ have at least one qubit on which they both act non-trivially, that is,
\begin{equation}
\supp\left(E_{a}\right)\cap\supp\left(E_{b}\right)\neq\emptyset.
\end{equation}
\end{definition}

\begin{definition}
Let $H$ be a $k-$local Hamiltonian, and $G$ be the dual interaction graph associated with $H$. We say that a tensor product of Pauli matrices $P$ is kl-$G$-local if there exists a set $S\subset\left\{ 1,\ldots,m\right\} $ of size $l$ such that 
\begin{equation}
\supp P\subset\cup_{a\in S}\supp E_{a}.
\end{equation}
Denote the set of all $kl$ local Pauli matrices as $P_{kl}.$
\end{definition}
In order to formulate the measurement condition in polynomial form, one needs to make use of the following correspondence between commutators and polynomials.
\begin{definition}
Let $p\left(x,y\right)=\sum_{i+j\leq d}a_{ij}x^{i}y^{j}$ be a bi-variate polynomial, the associated matrix commutator polynomial reads
\begin{equation}
p\left(X,Y\left|A\right.\right)=\sum_{i+j\leq d}a_{ij}\left[X,\left[Y,A\right]_{j}\right]_{i},    
\end{equation}
where $X,Y,A$ are matrices and $\left[\cdot,\cdot\right]_{i}$ denotes the $i$-the nested commutator. The single variable variant for a polynomial $p\left(x\right)=\sum_{i\leq d}a_{i}x^{i}$ reads
\begin{equation}
p\left(X\left|A\right.\right)=\sum_{i\leq d}a_{i}\left[X,A\right]_{i}.
\end{equation}
\end{definition}

\subsection{Polynomial Optimization Problem Formulation}
With the use of the preceding definitions and the approximation of the trace function detailed in \cite{bakshi2023learning}, which is denoted by $\widetilde{\text{Tr}}$, it is possible to formulate a polynomial optimization problem. Setting $\epsilon > 0$ as the desired error of approximation, we may formulate it as follows:

\begin{align}
    a\in\left\{ 1,\ldots,m\right\}  && \widehat{\lambda}_{a}\in\left[-1,1\right],\nonumber\\
    \forall A_{1},A_{2}\in\mathcal{A} && \left|\widetilde{\text{Tr}}\left(A_{1}A_{2}\left(\widehat{H}\rho-\rho \widehat{H}\right)\right)\right|^{2}\leq\epsilon_{0}^{2},\nonumber \\ \forall B_{1},B_{2}\in\mathcal{B} && \left|\widetilde{\text{Tr}}\left(B_{2}Q_{n,l}\left(-\beta H\left|B_{1}\right.\right)\rho\right)-\widetilde{\text{Tr}}\left(B_{1}B_{2}\rho\right)\right|^{2}\leq\epsilon^{2},\label{eq_poly_opt_problem}
\end{align}
where $\epsilon_{0}=\frac{\epsilon^{10^{C\left(k,G\right)\beta}}}{m^{3}}$, $\mathcal{A}=P_{4C\left(k,G\right)l}$, $\mathcal{B}=P_{4k}$ and $\widehat{H}=\sum_{a=1}^{m}\widehat{\lambda}_{a}E_{a}$. According to \cite{bakshi2023learning} $Q_{n,l}$ must be an at least 

\begin{equation}
\left(10^{-4}2^{C\left(k,G\right)\beta}\ln\left(\frac{1}{\epsilon}\right),\frac{5}{C\left(k,G\right)\beta},10^{-4}\epsilon\right)\text{-flat exponential approximation.} 
\end{equation}
In order to ensure that estimated parameters $\widehat{\lambda}_{a}$ will be at most $\epsilon$ far from $\lambda_{a}$. 
Denote 
\begin{equation}
\delta\coloneqq10^{-4}2^{C\left(k,G\right)\beta}\ln\left(\frac{1}{\epsilon}\right),\eta\coloneqq\frac{5}{C\left(k,G\right)\beta},t\coloneqq10^{-4}\epsilon 
\end{equation}
and $k,l\in \mathbb{N}$ according to \eqref{eq_setting_k} and \eqref{eq_setting_l} then by Theorem \ref{thm_flat_approx_theorm} $Q_{k,l}$ is a flat approximation with the desired properties. Furthermore, since $10^{-4}\epsilon=t<1$ \eqref{eq_bounding_constant_simple} shows that the resulting polynomial is $\left(2^{k+1}l,ce\right)$ bounded, where $c$ depends only on $k$. Taking the previous settings into consideration, one may show that
\begin{equation}\label{eq_boundedness_prop}
\deg\left(Q_{k,l}\right)\leq2^{C\left(k,G\right)\beta+1}\left(e+\ln\left(\frac{1}{\epsilon}\right)\right).    
\end{equation}

Now, we are ready to state one of the main results of our paper. As in Bakshi et al.\ \cite{bakshi2023learning}, who utilized their flat approximation
of the matrix exponential to analyze moment/SOS relaxations closely related to those of Haah et al.\ \cite{haah2022optimal}, we utilize our flat approximation of the matrix
exponential to analyze closely related moment/SOS relaxations:

\begin{theorem}
\label{thm:runtime}
Suppose that $H$ is a $k$-local $n$-qubit Hamiltonian with dual interacting graph $G$. Let $\epsilon > 0$, $\delta >0$ be arbitrary, and let the inverse temperature $\beta$ be known. Then, there exists an algorithm that can estimate Hamiltonian coefficients $\widehat{\lambda}_{a}$ with probability greater than $1-\delta$ with accuracy $\left(\widehat{\lambda}_{a}-\lambda_{a}\right)^{2}\leq\epsilon^{2}$ for all $a\in\left\{ 1,\ldots,m\right\} $. 
Assuming that the bit-complexity of the estimates $\widehat{\lambda}_{a}$ is $\poly(n,\log(1/\eps))$,
the time complexity of the algorithm is
\begin{equation}\label{eq_time_complexity}
\left[m+m^{O\left(1\right)}\left(\frac{1}{\epsilon}\right)^{\exp\left(C\left(k,G\right)\beta\right)}\right]^{E\left(K,\epsilon,G,k\right)},
\end{equation}
where $E\left(K,\epsilon,G,k\right)$ is independent of the size of the problem and it uses 
\begin{equation}\label{eq_number_of_gibbs_cop}
O\left(\left(\frac{m^{6}}{\epsilon^{e^{C\left(k,G\right)\beta}}}\right)\ln\left(\frac{m}{\delta}\right)\right)    
\end{equation}
copies of the Gibbs state.
\end{theorem}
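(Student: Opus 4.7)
The plan is to combine the Chebyshev flat approximation constructed in Section \ref{sec_chebyshev_product_flat} with the polynomial optimization framework from equation \eqref{eq_poly_opt_problem}, and then control the moment/SOS relaxation cost using a bit-complexity argument in the style of O'Donnell and Raghavendra. First I would instantiate the polynomial $Q_{k,l}$ with the parameters $\delta$, $\eta$, $t$ specified just before equation \eqref{eq_boundedness_prop}, so that Theorem \ref{thm_flat_approx_theorm} guarantees it is a flat $(\delta,\eta,t)$-approximation with $t=10^{-4}\epsilon<1$ and hence, by equation \eqref{eq_bounding_constant_simple}, is $(2^{k+1}l,ce)$-bounded with $c$ depending only on $k$. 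This reduces the Hamiltonian learning task to the feasibility instance \eqref{eq_poly_opt_problem} whose degree satisfies the bound \eqref{eq_boundedness_prop}.

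Next, I would handle the sample complexity in the same way as Bakshi et al.\ \cite{bakshi2023learning}: by the Chernoff-Hoeffding concentration of local Pauli measurements on copies of the Gibbs state, producing estimates $\widetilde{\text{Tr}}$ of each relevant trace functional to error $\epsilon_0 = \epsilon^{10^{C(k,G)\beta}}/m^3$ simultaneously over the collections $\mathcal{A}=P_{4C(k,G)l}$ and $\mathcal{B}=P_{4k}$, with failure probability $\delta$, requires precisely the number of Gibbs copies claimed in \eqref{eq_number_of_gibbs_cop}; a union bound over the $\poly(m)$ constraints is the only extra step. The polynomial identifiability guarantee of \cite{bakshi2023learning} then implies that any feasible point $\widehat{\lambda}$ of \eqref{eq_poly_opt_problem} satisfies $(\widehat{\lambda}_a-\lambda_a)^2\le\epsilon^2$ pointwise, since our $Q_{k,l}$ meets their flatness specification.

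With a feasible POP in hand, I would pass to its moment/SOS relaxation at a level $E(K,\epsilon,G,k)$ that is $\poly(1/\epsilon,2^{C(k,G)\beta})$ but independent of $m$ and $n$; the resulting SDP has $m^{O(E)}$ variables because every constraint involves at most $O(C(k,G)l)$ of the $\widehat{\lambda}_a$, which is the source of the outer $[\,\cdot\,]^{E}$ exponent in \eqref{eq_time_complexity}. The inner bracket decomposes into the $m$ term for reading the samples and the $m^{O(1)}(1/\epsilon)^{\exp(C(k,G)\beta)}$ term for forming one level of the moment matrix, using \eqref{eq_boundedness_prop} and the coefficient bound $(2^{k+1}l,ce)$ from the flat-approximation theorem to keep every coefficient at most exponential in $C(k,G)\beta \log(1/\epsilon)$.

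The hard part will be the bit-complexity step, because generic ellipsoid-method solvers for SDPs produced by the moment/SOS hierarchy are known to have superpolynomial bit complexity \cite{o2017sos,raghavendra2017bit,gribling2023note}. Here I would use the hypothesis that the estimates $\widehat{\lambda}_a$ admit $\poly(n,\log(1/\epsilon))$-bit representations, together with the $(d,C)$-boundedness of $Q_{k,l}$, to show that the polynomial constraints in \eqref{eq_poly_opt_problem} have bit complexity $\poly(n,\log(1/\epsilon),2^{C(k,G)\beta})$; this feeds into the well-conditionedness hypothesis required to run an SDP solver in time polynomial in the bit size, after which the moment matrix can be rounded to an approximate feasible point within accuracy $\epsilon_0$ in time matching \eqref{eq_time_complexity}. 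Putting the sample step and the SDP step together yields simultaneously the runtime \eqref{eq_time_complexity} and the Gibbs-copy bound \eqref{eq_number_of_gibbs_cop}, completing the theorem.
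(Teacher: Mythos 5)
Your proposal follows essentially the same route as the paper: instantiate $Q_{k,l}$ via Theorem \ref{thm_flat_approx_theorm} so that \eqref{eq_poly_opt_problem} becomes a bounded-degree POP, take the Gibbs-copy count \eqref{eq_number_of_gibbs_cop} from the concentration results of prior work, and obtain the runtime from the $\left(m+r\right)^{O\left(d\right)}$ ellipsoid bound under the assumed polynomial bit complexity, with $E\left(K,\epsilon,G,k\right)=\deg\left(Q_{k,l}\right)$ controlled by \eqref{eq_boundedness_prop}. The only cosmetic difference is that the paper reads the inner bracket of \eqref{eq_time_complexity} as ``number of variables plus number of constraints'' via \eqref{eq_est_for_size_of_A_B}, rather than your ``samples plus moment-matrix'' decomposition, and both treatments leave the bit-complexity hypothesis as an assumption rather than a proved fact.
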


\begin{proof}
The conclustion \eqref{eq_number_of_gibbs_cop} follows from \cite[Theorem 6.1]{Anshu_2021} and setting of problem \eqref{eq_poly_opt_problem}. To prove \eqref{eq_time_complexity}, we note that the number of constraints (by the construction of $\mathcal{A}$ and $\mathcal{B}$)
\begin{equation}\label{eq_est_for_size_of_A_B}
\left|\mathcal{A}\right|+\left|\mathcal{B}\right|\leq m^{O\left(1\right)}\left(\frac{1}{\epsilon}\right)^{\exp\left(O\left(C\left(k,G\right)\beta\right)\right)}.
\end{equation}
Furthermore, \cite{Gro81} showed that if the bit-complexity grows polynomially in the inputs, the ellipsoid algorithm solves \eqref{eq_poly_opt_problem} in time 
\begin{equation}
\left(m+r\right)^{O\left(d\right)},
\end{equation}
where $m$ is the number of variables, $r$ is the number of constraints, and $d$ is the degree of pseudo-distribution that is is output of the algorithm. Using \eqref{eq_boundedness_prop} and \eqref{eq_est_for_size_of_A_B} one arrives at a polynomial complexity with respect to the inputs: 
\begin{equation}
\left[m+m^{O\left(1\right)}\left(\frac{1}{\epsilon}\right)^{\exp\left(C\left(k,G\right)\beta\right)}\right]^{E\left(K,\epsilon,G,k\right)},
\end{equation}
where 
\begin{equation}
E\left(K,\epsilon,G,k\right)=\deg\left(Q_{k,l}\right)\leq2^{C\left(k,G\right)\beta+1}\left(e+\ln\left(\frac{1}{\epsilon}\right)\right)   
\end{equation}
is independent of the input size of the problem.
\end{proof}

\subsection{The Bit Complexity of the Polynomial Optimization Problem}

In the previous section, our Theorem \ref{thm:runtime} assumes that the bit complexity of the estimate is polynomial.
This section discusses this assumption in more detail.

In a string of somewhat negative results on the moment/SOS method for polynomial optimization problems in dimension $n$ \cite{o2017sos,raghavendra2017bit,gribling2023note}, 
it was shown that the existence of SDP relaxations whose dimensions are polynomial in $n$ does not need to guarantee that the bit complexity of the SDP relaxation is polynomial in $n$, or indeed that the runtime of the algorithm for solving the SDP is polynomial in $n$ up to some accuracy.  
However, in the literature on polynomial optimization, it has often been claimed that at a fixed level $d$ in the Moment/Sum-of-Squares (SOS) hierarchy, the
solution of a semidefinite program of size polynomial in the number of variables n (or alternatively, 
the degree $d$ SOS proof) can be obtained in time $O(n^d)$.
However, even when the instance satisfies the so-called ``Archimedean'' condition (slightly stronger than compactness, ``explicitly bounded''),   
there need not be an exponential upper bound on the number of bits needed to write down the solution.
O’Donnell \cite{o2017sos} presents an example of a degree-2 SOS program with bounded coefficients on the input,
but with an exponential number of bits in the solution,
and (in Section 2.1) with an exponential number of bits in any constant-factor approximation of the solution.
Raghavendra et al.\ \cite{raghavendra2017bit} have shown that there are polynomial systems with Boolean constraints and related non-negative polynomials, which have degree-two SOS proofs, but no SOS proof of degree $\Omega(\sqrt{n})$ with small coefficients. 
Pataki et al.\ \cite{pataki2021exponential} explain how such solutions arise in general. 
Thus, any known algorithm, including the ellipsoid algorithm, would take exponential time. 
Such examples crucially rely on the fact that there are equalities in the polynomial optimization problem.

In the case of learning quantum Hamiltonians, where the polynomial optimization problem is not given explicitly,
but is obtained by some combination of Magnus \cite{marecek2020quantum}, Taylor \cite{bakshi2023learning}, and Chebyshev \cite{bondar2023globally} expansions, 
the bit complexity becomes a non-trivial question. 

An approach to this non-trivial question has been recently suggested by Gribling et al.\ \cite{gribling2023note}:

\begin{theorem}[Gribling et al.\ \cite{gribling2023note}]\label{THM:FULLDIM}
Let $S(\mathbf{g}) \subseteq \mathbb{R}^n$ be a semi-algebraic set defined by inequalities. Assume that the following two conditions are satisfied:
\begin{enumerate}
    \item $S(\mathbf{g})$ is explicitly bounded, i.e., there exists a constant $1 \leq R \leq 2^{\mathrm{poly}(n)}$ such that the instance remains feasible after we can add the constraint:
    \begin{align}
    \label{eq:archimed}
    g_1(\mx): \sum_{i=1}^n \mx_i^2 \le R^2.
    \end{align}
    \item $S(\mathbf{g})$ contains a ball of radius $r \geq 2^{-\mathrm{poly}(n)}$, i.e., $B(z, r) \subseteq S(\mathbf{g})$ for some $z \in \mathbb{R}^n$.
\end{enumerate}
Then, for fixed $t \geq \lceil \deg(f)/2\rceil$ and $\varepsilon \geq 2^{-\mathrm{poly}(n)}$, the bound $\mathrm{sos}(f)_t$ can be computed in polynomial time in~$n$ up to an additive error of at most $\varepsilon$. 
\end{theorem}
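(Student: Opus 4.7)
The plan is to recast the degree-$2t$ SOS bound $\mathrm{sos}(f)_t$ as a semidefinite program (SDP) of size polynomial in $n$ (for fixed $t$) and then invoke standard polynomial-time SDP technology, with the two hypotheses of the theorem supplying the quantitative conditioning required. First I would write
\[
\mathrm{sos}(f)_t = \sup\Bigl\{\lambda \in \mathbb{R} : f - \lambda = \sigma_0 + \sum_{i} \sigma_i\, g_i,\ \sigma_i \text{ SOS of degree } \le 2t - \deg(g_i)\Bigr\},
\]
and convert each sum of squares into a Gram matrix in the monomial basis of degree $\le t$. This produces a primal SDP in variables of dimension $\binom{n+t}{t}^{O(1)}$, which is $n^{O(t)}$, hence polynomial for fixed $t$; the dual is the moment relaxation over pseudo-expectations of degree $2t$ that are nonnegative on the constraints $g_i$.

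Next I would exploit the two hypotheses to bound the conditioning of this SDP. Condition (1) lets me incorporate $g_1(\mx) = R^2 - \sum_i \mx_i^2 \ge 0$ without changing $S(\mathbf{g})$; this Archimedean-type inequality forces every feasible pseudo-moment sequence of degree $\le 2t$ to have magnitude at most $R^{2t} \le 2^{\poly(n)}$, giving an explicit outer radius for the dual feasible set in moment space. Condition (2) is used to produce a strictly feasible dual point: the pseudo-expectation induced by the uniform measure on $B(z,r)$ is an honest measure supported inside $S(\mathbf{g})$, so its moment matrix of degree $t$ is positive definite and lies strictly in the interior of the PSD cone, with a Slater gap that one can show is at least $r^{O(t)} \ge 2^{-\poly(n)}$. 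Taking $z$ itself as the centre then yields an initial ball inside the dual feasible region whose radius is $2^{-\poly(n)}$.

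With an outer radius $2^{\poly(n)}$ from (1), an inner ball of radius $2^{-\poly(n)}$ from (2), and a separation oracle that reduces to computing the smallest eigenvalue of a symmetric matrix of size $n^{O(t)}$, I would invoke the Gr\"otschel--Lov\'asz--Schrijver framework (essentially the same ellipsoid-method machinery used in the proof of Theorem~\ref{thm:runtime}) to solve the SDP up to additive error $\varepsilon \ge 2^{-\poly(n)}$ in time polynomial in $n$ and $\log(1/\varepsilon)$. Since the Archimedean condition also guarantees that the SOS hierarchy converges and that the optimal value itself lies in $[-R^{\deg f}, R^{\deg f}]$, the additive-error guarantee lifts directly to $\mathrm{sos}(f)_t$.

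The main obstacle, and presumably the heart of Gribling et al.\ \cite{gribling2023note}, is the quantitative interior-point step: one must translate the geometric fact $B(z,r) \subseteq S(\mathbf{g})$ in $\mathbb{R}^n$ into an explicit lower bound on the smallest eigenvalue of the moment matrix of the truncated uniform measure on $B(z,r)$, \emph{and} simultaneously on each localising matrix $M_t(g_i\,\mu)$. Naive estimates give bounds that degrade exponentially in $t$ or in $n$; the delicate part is to show that for fixed $t$ the smallest eigenvalue decays only like $r^{O(t)}$ (times a factor depending on $n$ but at most $2^{\poly(n)}$), so that both the inner-ball radius and the separation margin remain $2^{-\poly(n)}$ and the ellipsoid method runs in polynomial time.
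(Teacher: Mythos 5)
The paper does not prove this statement: Theorem~\ref{THM:FULLDIM} is quoted verbatim from Gribling et al.\ \cite{gribling2023note} and used as a black box, so there is no internal proof to compare yours against. Measured against the argument in that reference, your sketch follows the same route: express $\mathrm{sos}(f)_t$ as an SDP of size $n^{O(t)}$ in the degree-$\le t$ monomial basis, use the adjoined constraint \eqref{eq:archimed} to force every feasible degree-$2t$ pseudo-moment sequence to have entries bounded by $R^{O(t)}\le 2^{\poly(n)}$ (the outer ball), use $B(z,r)\subseteq S(\mathbf{g})$ to manufacture a strictly feasible moment sequence from a measure supported on that ball (the inner ball), and then invoke the Gr\"otschel--Lov\'asz--Schrijver ellipsoid framework. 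This is the right strategy and the right use of each hypothesis.

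As a proof, however, your text stops exactly where the theorem becomes non-trivial. The entire content of the result is the quantitative lemma you defer: that for fixed $t$ the smallest eigenvalue of the truncated moment matrix \emph{and} of every localizing matrix of the (say, uniform) measure on $B(z,r)$ is bounded below by $2^{-\poly(n)}$, so that the inner ball in moment space does not degenerate. Writing that ``the delicate part is to show'' this is an accurate diagnosis, not a proof; without it the ellipsoid method has no guaranteed interior point and the running-time bound does not follow. Two smaller gaps to close if you flesh this out: (i) the $R^{2t}$ bound on pseudo-moments only holds if the redundant constraint $g_1$ is actually adjoined to the quadratic module defining the relaxation, which you use implicitly but should state; (ii) to convert the additive-error SDP solution into an additive-error bound on $\mathrm{sos}(f)_t$ itself you need the optimal value to be attained (or approached) within the bounded region, i.e.\ a strong-duality or boundedness-of-the-optimum argument, which again rests on the Archimedean constraint and deserves an explicit sentence.
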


This suggests that if the feasible set of our polynomial optimization problem \eqref{eq_poly_opt_problem} contains a ball of radius $r \geq 2^{-\poly(n)}$ centered at $z \in \mathbb{R}^n$,
there will be a solution with bit complexity $\poly(n,\log(1/\eps))$, if we add the redundant constraint \eqref{eq:archimed} to make the feasible set explicitly bounded.  
The containment of a ball can be tested on an instant-to-instance basis, cf. \cite{kellner2013containment,kellner2015semidefinite},  
but it seems non-trivial to guarantee in general. 
That is: one could include the corresponding constraint \cite{kellner2013containment,kellner2015semidefinite} 
in the problem formulation, but this could produce an infeasible SDP instance, which is known \cite{Ramana1997} to be challenging. 

Alternatively, one could bound the volume of the semi-algebraic feasible set \cite{henrion2009approximate,korda2018convergence,tacchi2022exploiting},
and then show that semi-algebraic sets with large volume contain a large ball. In the case of convex sets, this can be done utilizing John's theorem, 
cf. Lemma 11 in \cite{gribling2023note}.
Considering that the rates of convergence of volume estimation are known \cite{korda2018convergence},
this may allow for a more detailed analysis. 

\section{Conclusions}

The problem of learning local quantum Hamiltonians is of considerable importance, not least in the implementation
of multi-qubit gates by pulse shaping using quantum optimal control \cite{bondar2023globally}, which requires a quantum Hamiltonians.
One approach to this problem utilizes approximates matrix exponential with polynomials and hierarchies of semidefinite programming relaxations
for polynomial optimization. 
We have developed a novel flat approximation of the matrix exponential utilizing Chebyshev expansion, which strictly improves
upon the use of Taylor expansion in Bakshi et al.\ \cite{bakshi2023learning}.
We have also clarified the issues of bit complexity, which are crucial for proving polynomial run time of algorithms
utilizing moment/SOS relaxations obtained with the flat approximation of the matrix exponential. 
Similar techniques could also be used in the learning of models of open quantum systems \cite{Bondar2020,bondar2022recovering}, and the use of models of quantum systems in
quantum optimal control, following Bondar et al. \cite{bondar2022recovering,bondar2023globally}.
We hope that this will stimulate further research in this direction. 





\paragraph*{Acknowledements}
Ales Wodecki and Jakub Marecek acknowledge the support of the Czech Science Foundation (23-07947S).

\bibliographystyle{ieeetr}
\bibliography{refs}

\begin{thebibliography}{10}

\bibitem{gebhart2023learning}
V.~Gebhart, R.~Santagati, A.~A. Gentile, E.~M. Gauger, D.~Craig, N.~Ares,
  L.~Banchi, F.~Marquardt, L.~Pezz{\`e}, and C.~Bonato, ``Learning quantum
  systems,'' {\em Nature Reviews Physics}, vol.~5, no.~3, pp.~141--156, 2023.

\bibitem{anshu2023survey}
A.~Anshu and S.~Arunachalam, ``A survey on the complexity of learning quantum
  states,'' {\em arXiv:2305.20069}, 2023.

\bibitem{henrion2020moment}
D.~Henrion, M.~Korda, and J.~B. Lasserre, {\em The Moment-sos Hierarchy:
  Lectures In Probability, Statistics, Computational Geometry, Control And
  Nonlinear Pdes}, vol.~4.
\newblock World Scientific, 2020.

\bibitem{anshu2021sample}
A.~Anshu, S.~Arunachalam, T.~Kuwahara, and M.~Soleimanifar, ``Sample-efficient
  learning of interacting quantum systems,'' {\em Nature Physics}, vol.~17,
  no.~8, pp.~931--935, 2021.

\bibitem{haah2022optimal}
J.~Haah, R.~Kothari, and E.~Tang, ``Optimal learning of quantum hamiltonians
  from high-temperature gibbs states,'' in {\em 2022 IEEE 63rd Annual Symposium
  on Foundations of Computer Science (FOCS)}, pp.~135--146, IEEE, 2022.

\bibitem{bakshi2023learning}
A.~Bakshi, A.~Liu, A.~Moitra, and E.~Tang, ``Learning quantum hamiltonians at
  any temperature in polynomial time,'' 2023.
\newblock arXiv:2310.02243.

\bibitem{o2017sos}
R.~O'Donnell, ``Sos is not obviously automatizable, even approximately,'' in
  {\em 8th Innovations in Theoretical Computer Science Conference (ITCS 2017)},
  Schloss Dagstuhl-Leibniz-Zentrum fuer Informatik, 2017.

\bibitem{raghavendra2017bit}
P.~Raghavendra and B.~Weitz, ``On the bit complexity of sum-of-squares
  proofs,'' in {\em 44th International Colloquium on Automata, Languages, and
  Programming (ICALP 2017)}, Schloss Dagstuhl-Leibniz-Zentrum fuer Informatik,
  2017.

\bibitem{gribling2023note}
S.~Gribling, S.~Polak, and L.~Slot, ``A note on the computational complexity of
  the moment-sos hierarchy for polynomial optimization,'' p.~280–288, 2023.
\newblock Proceedings of the 2023 International Symposium on Symbolic and
  Algebraic Computation.

\bibitem{marecek2020quantum}
J.~Marecek and J.~Vala, ``Quantum optimal control via magnus expansion and
  non-commutative polynomial optimization,'' {\em arXiv preprint
  arXiv:2001.06464}, 2020.

\bibitem{bondar2022recovering}
D.~I. Bondar, Z.~Popovych, K.~Jacobs, G.~Korpas, and J.~Marecek, ``Recovering
  models of open quantum systems from data via polynomial optimization: Towards
  globally convergent quantum system identification,'' 2022.

\bibitem{bondar2023globally}
D.~I. Bondar, K.~Jacobs, G.~Korpas, J.~Marecek, and J.~Vala, ``Globally optimal
  quantum control,'' 2023.
\newblock arXiv:2209.05790v2.

\bibitem{Chebyshev1854}
P.~L. Chebyshev, ``Th\' eorie des m\' ecanismes connus sous le nom de parall\'
  elogrammes,'' {\em M\' em. Acad. Sci. P\' etersb.}, vol.~Series 7,
  p.~539–568, 1854.
\newblock Reprinted in A. Markoff and N. Sonin, editors, Oeuvres de P. L.
  Tchebychef.

\bibitem{Chebyshev1859}
P.~L. Chebyshev, ``Sur les questions de minima qui se rattachent \' a la repr\'
  esentation approximative des fonctions,'' {\em M\' em. Acad. Sci. P\'
  etersb.}, vol.~Series 7, p.~199–291, 1854.
\newblock Reprinted in A. Markoff and N. Sonin, editors, Oeuvres de P. L.
  Tchebychef.

\bibitem{Gil19}
A.~Gily\'{e}n, Y.~Su, G.~H. Low, and N.~Wiebe, ``Quantum singular value
  transformation and beyond: Exponential improvements for quantum matrix
  arithmetics,'' in {\em Proceedings of the 51st Annual ACM SIGACT Symposium on
  Theory of Computing}, STOC 2019, (New York, NY, USA), p.~193–204,
  Association for Computing Machinery, 2019.

\bibitem{tang2023cs}
E.~Tang and K.~Tian, ``A cs guide to the quantum singular value
  transformation,'' 2023.

\bibitem{LUKE197241}
Y.~L. Luke, ``Inequalities for generalized hypergeometric functions,'' {\em
  Journal of Approximation Theory}, vol.~5, no.~1, pp.~41--65, 1972.

\bibitem{wodecki2024taylorlike}
A.~Wodecki, ``On taylor-like estimates for $l^{2}$ polynomial approximations,''
  2024.

\bibitem{Anshu_2021}
A.~Anshu, S.~Arunachalam, T.~Kuwahara, and M.~Soleimanifar, ``Sample-efficient
  learning of interacting quantum systems,'' {\em Nature Physics}, vol.~17,
  pp.~931--935, may 2021.

\bibitem{Gro81}
M.~Grötschel, L.~László, and A.~Schrijver, ``The ellipsoid method and its
  consequences in combinatorial optimization,'' {\em Combinatorica}, vol.~1,
  pp.~169--197, 06 1981.

\bibitem{pataki2021exponential}
G.~Pataki and A.~Touzov, ``How do exponential size solutions arise in
  semidefinite programming?,'' {\em SIAM Journal on Optimization}, 2023.
\newblock arXiv preprint arXiv:2103.00041.

\bibitem{kellner2013containment}
K.~Kellner, T.~Theobald, and C.~Trabandt, ``Containment problems for polytopes
  and spectrahedra,'' {\em SIAM Journal on Optimization}, vol.~23, no.~2,
  pp.~1000--1020, 2013.

\bibitem{kellner2015semidefinite}
K.~Kellner, T.~Theobald, and C.~Trabandt, ``A semidefinite hierarchy for
  containment of spectrahedra,'' {\em SIAM Journal on Optimization}, vol.~25,
  no.~2, pp.~1013--1033, 2015.

\bibitem{Ramana1997}
M.~V. Ramana, ``An exact duality theory for semidefinite programming and its
  complexity implications,'' {\em Mathematical Programming}, vol.~77, no.~1,
  pp.~129--162, 1997.

\bibitem{henrion2009approximate}
D.~Henrion, J.~B. Lasserre, and C.~Savorgnan, ``Approximate volume and
  integration for basic semialgebraic sets,'' {\em SIAM Review}, vol.~51,
  no.~4, pp.~722--743, 2009.

\bibitem{korda2018convergence}
M.~Korda and D.~Henrion, ``Convergence rates of moment-sum-of-squares
  hierarchies for volume approximation of semialgebraic sets,'' {\em
  Optimization Letters}, vol.~12, pp.~435--442, 2018.

\bibitem{tacchi2022exploiting}
M.~Tacchi, T.~Weisser, J.~B. Lasserre, and D.~Henrion, ``Exploiting sparsity
  for semi-algebraic set volume computation,'' {\em Foundations of
  Computational Mathematics}, pp.~1--49, 2022.

\bibitem{Bondar2020}
G.~McCauley, B.~Cruikshank, D.~I. Bondar, and K.~Jacobs, ``Accurate
  lindblad-form master equation for weakly damped quantum systems across all
  regimes,'' {\em npj Quantum Information}, vol.~6, no.~1, p.~74, 2020.

\end{thebibliography}
\end{document}